\pgfplotsset{compat=1.8}
\newcommand{\ie}{i.e.\@\xspace}
\newcommand{\eg}{e.g.\@\xspace}
\newcommand{\true}{\mathit{true}}
\newcommand{\false}{\mathit{false}}
\newcommand{\inputs}{I}
\newcommand{\outputs}{O}
\newcommand{\inp}[1]{\boldsymbol{#1 }}
\newcommand{\compatibleWords}[1]{\mathcal{C}(#1)}
\newcommand{\restrict}[2]{#1 \cap #2}
\newcommand{\project}[2]{#1_{\pi(#2)}}
\newcommand{\propositions}[1]{\textit{prop}(#1)}
\newcommand{\depGraph}[1]{\mathcal{D}_{#1}}
\newcommand{\Next}{\LTLnext}
\newcommand{\Globally}{\LTLsquare}
\newcommand{\Eventually}{\LTLdiamond}
\newcommand{\Until}{\LTLuntil}
\DeclareMathOperator{\pc}{||}
\newcommand{\var}{\texttt}
\newcommand{\realInp}[2]{\mu}
\begin{document}

\title{Specification Decomposition for Reactive Synthesis\thanks{This work was partially supported by the German Research Foundation~(DFG) as part of the Collaborative Research Center ``Foundations of Perspicuous Software Systems'' (TRR 248 -- CPEC, 389792660), and by the European Research Council (ERC) Grant OSARES (No. 683300). The authors thank Alexandre Duret-Lutz for providing valuable feedback on the algorithm and for bringing up the idea of extending assumption dropping to non-strict formulas. Moreover, they thank Marvin Stenger for help with the implementation.}
}


\author{Bernd Finkbeiner \and
		Gideon Geier \and
        Noemi Passing
}


\institute{B.~Finkbeiner and N.~Passing \at
              CISPA Helmholtz Center for Information Security, Germany \\
              \email{finkbeiner@cispa.de, noemi.passing@cispa.de}
           \and
           G.~Geier \at
              Saarland University, Germany \\
              \email{geier@react.uni-saarland.de}
}

\date{Received: date / Accepted: date}

\maketitle

\begin{abstract}
Reactive synthesis is the task of automatically deriving a correct implementation from a specification. It is a promising technique for the development of verified programs and hardware.
	Despite recent advances in terms of algorithms and tools, however, reactive synthesis is still not practical when the specified systems reach a certain bound in size and complexity.
	In this paper, we present a sound and complete modular synthesis algorithm that automatically decomposes the specification into smaller subspecifications. For them, independent synthesis tasks are performed, significantly reducing the complexity of the individual tasks.
	Our decomposition algorithm guarantees that the subspecifications are independent in the sense that completely separate synthesis tasks can be performed for them. Moreover, the composition of the resulting implementations is guaranteed to satisfy the original specification.
	Our algorithm is a preprocessing technique that can be applied to a wide range of synthesis tools. 
	We evaluate our approach with state-of-the-art synthesis tools on established benchmarks: The runtime decreases significantly when synthesizing implementations modularly.
\keywords{Reactive Synthesis \and Specification Decomposition \and Modular Synthesis \and Compositional Synthesis \and Preprocessing for Synthesis}
\end{abstract}

\section{Introduction}

Reactive synthesis automatically derives an implementation that satisfies a given specification. It is a push-button method producing implementations which are correct by construction. Therefore, reactive synthesis is a promising technique for the development of probably correct systems since it allows for concentrating on \emph{what} a system should do instead of \emph{how} it should be done.

Despite recent advances in terms of efficient algorithms and tools, however, reactive synthesis is still not practical when the specified systems reach a certain bound in size and complexity. 
It is long known that the scalability of model checking algorithms can be improved significantly by using compositional approaches, \ie, by breaking down the analysis of a system into several smaller subtasks.~\cite{Compos97,ClarkeLM89}.
In this paper, we apply compositional concepts to reactive synthesis:
We present and extend a modular synthesis algorithm~\cite{FinalVersion} that decomposes a specification into several subspecifications. Then, independent synthesis tasks are performed for them. The implementations obtained from the subtasks are combined into an implementation for the initial specification. 
The algorithm uses synthesis as a black box and can thus be applied to a wide range of synthesis algorithms.
In particular, it can be seen as a preprocessing step for reactive synthesis that enables compositionality for existing algorithms and tools.

Soundness and completeness of modular synthesis strongly depends on the decomposition of the specification into subspecifications.
We introduce a criterion, \emph{non-contradictory independent sublanguages}, for subspecifications that ensures soundness and completeness: The original specification is equirealizable to the subspecifications and the parallel composition of the implementations for the subspecifications is guaranteed to satisfy the original specification. 
The key question is now how to decompose a specification such that the resulting subspecifications satisfy the criterion.

Lifting the language-based criterion to the automaton level, we present a decomposition algorithm for nondeterministic Büchi automata that directly implements the independent sublanguages paradigm. 
Thus, using subspecifications obtained with this decomposition algorithm ensures soundness and completeness of modular synthesis.
A specification given in the standard temporal logic LTL can be translated into an equivalent nondeterministic Büchi automaton and hence the decomposition algorithm can be applied as well.

However, while the decomposition algorithm is semantically precise, it utilizes several expensive automaton operations. For large specifications, the decomposition thus becomes infeasible.
Therefore, we present an approximate decomposition algorithm for LTL specification that still ensures soundness and completeness of modular synthesis but is more scalable.
It is approximate in the sense that, in contrast to the automaton decomposition algorithm, it does not necessarily find all possible decompositions.
Moreover, we present an optimization of the LTL decomposition algorithm for formulas in a common assumption-guarantee format. It analyzes the assumptions and drops those that do not influence the realizability of the rest of the formula, yielding more fine-grained decompositions.
We extend the optimization from specifications in a strict assume-guarantee format to specifications consisting of several conjuncts in assume-guarantee format. 
This allows for applying the optimization to even more of the common LTL synthesis benchmarks.

We have implemented both decomposition procedures as well as the modular synthesis algorithm and used it with the two state-of-the-art synthesis tools BoSy~\cite{BoSy} and Strix~\cite{MeyerStrix}. We evaluate our algorithms on the established benchmarks from the synthesis competition SYNTCOMP~\cite{SYNTCOMP}.
As expected, the decomposition algorithm for nondeterministic Büchi automata becomes infeasible when the specifications grow.
For the LTL decomposition algorithm, however, the experimental results are excellent: Decomposition terminates in less than 26 milliseconds on all benchmarks. Hence, the overhead of LTL decomposition is negligible, even for non-decomposable specifications.
Out of 39 decomposable specifications, BoSy and Strix increase their number of synthesized benchmarks by nine and five, respectively. For instance, on the generalized buffer benchmark~\cite{JacobsB16,Jobstmann07} with three receivers, BoSy is able to synthesize a solution within~28 seconds using modular synthesis while neither the non-compositional version of BoSy, nor the non-compositional version of Strix terminates within one hour.
For twelve and nine further benchmarks, respectively, BoSy and Strix reduce their synthesis times significantly, often by an order of magnitude or more, when using modular synthesis instead of their classical algorithms.
The remaining benchmarks are too small and too simple for compositional methods to pay off. 
Thus, decomposing the specification into smaller subspecifications indeed increases the scalability of synthesis on larger systems.

\textbf{Related Work:} 
Compositional approaches are long known to improve the scalability of model checking algorithms significantly~\cite{Compos97,ClarkeLM89}.
The approach that is most related to our contribution is a preprocessing algorithm for compositional model checking~\cite{DurejaR18}.
It analyzes dependencies between the properties that need to be checked in order to reduce the number of model checking tasks. We lift this idea from model checking to reactive synthesis. The dependency analysis in our algorithm, however, differs inherently from the one for model checking.

There exist several compositional approaches for reactive synthesis. 
The algorithm by Filiot~et~al. depends, like our LTL decomposition approach, heavily on dropping assumptions~\cite{FiliotJR10}. They use an heuristic that, in contrast to our criterion, is incomplete. While their approach is more scalable than a non-compositional one, one does not see as significant differences as for our algorithm. 
The algorithm by Kupferman~et~al. is designed for incrementally adding requirements to a specification during system design~\cite{KupfermanPV06}. Thus, it does not perform independent synthesis tasks but only reuses parts of the already existing solutions.
In contrast to our algorithm, both \cite{KupfermanPV06} and \cite{FiliotJR10} do not consider dependencies between the components to obtain prior knowledge about the presence or absence of conflicts in the implementations.

Assume-guarantee synthesis algorithms~\cite{ChatterjeeH07,MajumdarMSZ20,FinkbeinerP21,BloemCJK15} take dependencies between components into account. In this setting, specifications are not always satisfiable by one component alone. Thus, a negotiation between the components is needed. While this yields more fine-grained decompositions, it produces a significant overhead that, as our experiments show, is often not necessary for common benchmarks.
Avoiding negotiation, dependency-based compositional synthesis~\cite{FinkbeinerP20} decomposes the system based on a dependency analysis of the specification. The analysis is more fine-grained than the one presented in this paper. Moreover, a weaker winning condition for synthesis, remorsefree dominance~\cite{DammF11}, is used. While this allows for smaller synthesis tasks since the specification can be decomposed further, both the dependency analysis and using a different winning condition produce a larger overhead than our approach.

The reactive synthesis tools Strix~\cite{MeyerStrix}, Unbeast~\cite{Ehlers11}, and Safety-First~\cite{SohailS13} decompose the given specification. Strix uses decomposition to find suitable automaton types for internal representation and to identify isomorphic parts of the specification. Unbeast and Safety-First in contrast, decompose the specification to identify safety parts.
All three tools do not perform independent synthesis tasks for the subspecifications. In fact, our experiments show that the scalability of Strix still improves notably with our algorithm.

Independent of~\cite{FinalVersion}, Mavridou et al. introduce a compositional realizability analysis of formulas given in FRET~\cite{GiannakopoulouP20a} that is based on similar ideas as our LTL decomposition algorithm~\cite{MavridouKGKPW21}. They only study the realizability of formulas but do not synthesize solutions. Optimized assumption handling cannot easily be integrated into their approach. 
For a detailed comparison of both approaches, we refer to~\cite{MavridouKGKPW21}.
The first version~\cite{FinalVersion} of our modular synthesis approach is already well-accepted in the synthesis community: Our LTL decomposition algorithm has been integrated into the new version~\cite{LTLsyntOptimized} of the synthesis tool ltlsynt~\cite{LTLsynt}.

\section{Preliminaries}

\paragraph{LTL.} 
Linear-time temporal logic~(LTL)~\cite{Pnueli77} is a specification language for linear-time properties. 
For a finite set $\Sigma$ of atomic propositions, the syntax of LTL is given by
$ \varphi, \psi ::= a ~ | ~ \true ~ | ~ \neg \varphi ~ | ~ \varphi \lor \psi ~ | ~ \varphi \land \psi ~ | ~ \Next \varphi ~ | ~ \varphi \Until \psi$, where $a \in \Sigma$.
We define the operators $\Eventually \varphi := \true \Until \varphi$ and $\Globally \varphi := \neg \Eventually \neg \varphi$ and use standard semantics.
The atomic propositions in $\varphi$ are denoted by $\propositions{\varphi}$, where every occurrence of $\true$ or $\false$ in $\varphi$ does not add any atomic propositions to $\propositions{\varphi}$.
The language~$\mathcal{L}(\varphi)$ of~$\varphi$ is the set of infinite words that satisfy $\varphi$.

\paragraph{Automata.}
For a finite alphabet $\Sigma$, a nondeterministic Büchi automaton~(NBA) is a tuple $\mathcal{A} = (Q,Q_0,\delta,F)$, where $Q$ is a finite set of states, $Q_0 \subseteq Q$ is a set of initial states, $\delta: Q \times \Sigma \times Q$ is a transition relation, and $F \subseteq Q$ is a set of accepting states.
Given an infinite word $\sigma = \sigma_1\sigma_2 \dots \in \Sigma^\omega$, a run of $\sigma$ on $\mathcal{A}$ is an infinite sequence $q_1 q_2 q_3 \dots \in Q^\omega$ of states where $q_1 \in Q_0$ and $(q_i,\sigma_i,q_{i+1}) \in \delta$ holds for all $i \geq 1$. 
A run is accepting if it contains infinitely many accepting states. $\mathcal{A}$ accepts a word $\sigma$ if there is an accepting run of~$\sigma$ on~$\mathcal{A}$.
The language $\mathcal{L}(\mathcal{A})$ of $\mathcal{A}$ is the set of all accepted words.
Two NBAs are equivalent if their languages are.
An LTL specification $\varphi$ can be translated into an equivalent NBA $\mathcal{A}_\varphi$ with a single exponential blow up~\cite{KupfermanV05}.

\paragraph{Implementations and Counterstrategies.}
An implementation of a system with inputs~$\inputs$, outputs~$\outputs$, and variables $V = \inputs \cup \outputs$ is a function $f : (2^V)^* \times 2^\inputs \rightarrow 2^\outputs$ mapping a history of variables and the current input to outputs.
An infinite word $\sigma = \sigma_1 \sigma_2 \dots \in (2^V)^\omega$ is compatible with an implementation $f$ if for all $n \in \mathbb{N}$, $f(\sigma_1 \dots \sigma_{n-1}, \sigma_n \cap \inputs) = \sigma_n \cap \outputs$ holds.
The set of all compatible words of $f$ is denoted by $\compatibleWords{f}$.
An implementation~$f$ realizes a specification~$s$ if $\sigma \in \mathcal{L}(s)$ holds for all $\sigma \in \compatibleWords{f}$. 
A specification  is called realizable if there exists an implementation realizing it.
If a specification is unrealizable, there is a counterstrategy $f^c:(2^V)^* \rightarrow 2^\inputs$ mapping a history of variables to inputs. 
An infinite word $\sigma = \sigma_1 \sigma_2 \dots \in (2^V)^\omega$ is compatible with $f^c$ if $f^c(\sigma_1 \dots \sigma_{n-1}) = \sigma_n \cap \inputs$ holds for all $n \in \mathbb{N}$. All compatible words of $f^c$ violate $s$, \ie, $\compatibleWords{f^c} \subseteq \overline{\mathcal{L}(s)}$.

\paragraph{Reactive Synthesis.}
Given a specification, reactive synthesis derives an implementation realizing it. 
For LTL specifications, synthesis is 2EXPTIME-complete~\cite{PnueliR89}.
In this paper, we use reactive synthesis as a black box procedure and thus we do not go into detail here. Instead, we refer the interested reader to~\cite{Finkbeiner16}.

\paragraph{Notation.}
Overloading notation, we use union and intersection on infinite words: For $\sigma = \sigma_1 \sigma_2 \dots \in (2^{\Sigma_1})^\omega$, $\sigma' = \sigma'_1 \sigma'_2 \dots \in (2^{\Sigma_2})^\omega$ with $\Sigma = \Sigma_1 \cup \Sigma_2$, we define $\sigma \cup \sigma' := (\sigma_1 \cup \sigma'_1) (\sigma_2 \cup \sigma'_2) \dots \in (2^{\Sigma})^\omega$.
For $\sigma$ as above and a set~$X$, let $\sigma \cap X := (\sigma_1 \cap X) (\sigma_2 \cap X) \dots \in (2^X)^\omega$.

\section{Modular Synthesis}

In this section, we introduce a modular synthesis algorithm that divides the synthesis task into independent subtasks by splitting the specification into several subspecifications.
The decomposition algorithm has to ensure that the synthesis tasks for the subspecifications can be solved independently and that their results are non-contradictory, \ie, that they can be combined into an implementation satisfying the initial specification.
Note that when splitting the specification, we assign a set of relevant in- and output variables to every subspecification. The corresponding synthesis subtask is then performed on these variables.

\begin{algorithm}[t]
	\SetKwInput{KwData}{Input}
	\SetKwInOut{KwResult}{Output}
	\SetKw{KwBy}{by}
	
	\KwData{\var{s}: Specification, \var{inp}: List Variable, \var{out}: List Variable}
	\KwResult{\var{realizable}: Bool, \var{implementation}: $\mathcal{T}$}
	\var{subspecifications} $\leftarrow$ decompose(\var{s}, \var{inp}, \var{out})\label{alg:compositional_synthesis:decompose} \\
	\var{sub\_results} $\leftarrow$ map synthesize \var{subspecifications}\label{alg:compositional_synthesis:synthesize} \\
	\ForEach{\upshape{(\var{real},\var{strat}) $\in$ \var{sub\_results}}}{	
		\If{\upshape{! \var{real}}}{
			\var{impl} $\leftarrow$ extendCounterStrategy(\var{strat}, \var{s})\label{alg:compositional_synthesis:counterstrategy} \\
			\Return{\upshape{($\bot$, \var{impl})}}
		}
	}
	\var{impls} $\leftarrow$ map second \var{sub\_results} \\
	\Return{\upshape{($\top$, compose \var{impls})}}
	\caption{Modular Synthesis}\label{alg:compositional_synthesis}
\end{algorithm}

\Cref{alg:compositional_synthesis} describes this modular synthesis approach. First, the specification is decomposed into a list of subspecifications
using an adequate decomposition algorithm.
Then, the synthesis tasks for all subspecifications are solved. If a subspecification is unrealizable, its counterstrategy is extended to a counterstrategy for the whole specification. This construction is given in \Cref{def:counterstrategy}.
Otherwise, the implementations of the subspecifications are composed.

Intuitively, the behavior of the counterstrategy of an unrealizable subspecification~$s_i$ violates the full specification~$s$ as well. A counterstrategy for the full specification, however, needs to be defined on all variables of~$s$, \ie, also on the variables that do not occur in~$s_i$.
Thus, we extend the counterstrategy for $\varphi_i$ such that it ignores outputs outside of $s_i$ and produces an arbitrary valuation of the input variables outside of $s_i$:

\begin{definition}[Counterstrategy Extension\label{def:counterstrategy}]
	Let $s$ be a specification with $\mathcal{L}(s) \subseteq (2^V)^\omega$.
    Let $V_1, V_2 \subset V$ with $V_1 \cup V_2 = V$ and $V_1 \cap V_2 \subseteq \inputs$. Let $s_1, s_2$ be subspecifications of $s$ with $\mathcal{L}(s_1) \subseteq (2^{V_1})^\omega$, $\mathcal{L}(s_2) \subseteq (2^{V_2})^\omega$ such that $\mathcal{L}(s_1) \pc \mathcal{L}(s_1)  = \mathcal{L}(s)$.
    Let $s_1$ be unrealizable and let $f^c_1: (2^{V_1})^* \rightarrow 2^{\inputs \cap V_1}$ be a counterstrategy for~$s_1$.
    We construct a counterstrategy $f^c: (2^V)^* \rightarrow 2^\inputs$ from~$f^c_1$ for $s$: $f^c(\sigma) = f^c_1(\sigma \cap V_1) \cup \mu$, where $\mu \in 2^{\inputs \setminus V_1}$ is an arbitrary valuation of the input variables outside of $V_1$.
\end{definition}

The counterstrategy for the full specification constructed as in \Cref{def:counterstrategy} then indeed fulfills the condition of a counterstrategy for the full specification, \ie, all of its compatible words violate the full specification:

\begin{lemma}\label{lem:extension_counterstrategy}
	Let $s$ be a specification with $\mathcal{L}(s) \subseteq (2^V)^\omega\!$.
    Let $V_1, V_2 \subset V$ with $V_1 \cup V_2 = V$, $V_1 \cap V_2 \subseteq \inputs$.
    Let $s_1, s_2$ be specifications with $\mathcal{L}(s_1) \subseteq (2^{V_1})^\omega$, $\mathcal{L}(s_2) \subseteq (2^{V_2})^\omega$ and $\mathcal{L}(s_1) \pc \mathcal{L}(s_1)  = \mathcal{L}(s)$.
    Let $f^c_1: (2^{V_1})^* \rightarrow 2^{\inputs \cap V_1}$ be a counterstrategy for $s_1$. The function $f^c$ constructed as in \Cref{def:counterstrategy} from $f^c_i$ is a counterstrategy for $s$.
\end{lemma}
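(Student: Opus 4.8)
The goal is to establish the defining property of a counterstrategy for $s$, namely $\compatibleWords{f^c} \subseteq \overline{\mathcal{L}(s)}$, and my plan is to reduce this to the analogous property already known for $f^c_1$. Since $\mathcal{L}(s) = \mathcal{L}(s_1) \pc \mathcal{L}(s_2)$, the defining property of parallel composition says that a word $\sigma$ lies in $\mathcal{L}(s)$ only if its projection $\sigma \cap V_1$ lies in $\mathcal{L}(s_1)$; hence it suffices to show that every $\sigma \in \compatibleWords{f^c}$ satisfies $\sigma \cap V_1 \notin \mathcal{L}(s_1)$. Because $f^c_1$ is a counterstrategy for $s_1$, we already have $\compatibleWords{f^c_1} \subseteq \overline{\mathcal{L}(s_1)}$, so the whole statement follows once I prove the single inclusion $\sigma \cap V_1 \in \compatibleWords{f^c_1}$ for every compatible $\sigma$.

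The heart of the argument is a projection computation, which I would carry out as follows. Fix an arbitrary $\sigma = \sigma_1 \sigma_2 \dots \in \compatibleWords{f^c}$. For every $n$, compatibility with $f^c$ gives $f^c(\sigma_1 \dots \sigma_{n-1}) = \sigma_n \cap \inputs$, and unfolding the construction of \Cref{def:counterstrategy} rewrites the left-hand side as $f^c_1\bigl((\sigma \cap V_1)_1 \dots (\sigma \cap V_1)_{n-1}\bigr) \cup \mu$. I would then intersect both sides with $V_1$. On the left, $f^c_1(\cdot) \subseteq \inputs \cap V_1 \subseteq V_1$ is unaffected, while $\mu \subseteq \inputs \setminus V_1$ is disjoint from $V_1$ and therefore vanishes; on the right, $\sigma_n \cap \inputs \cap V_1 = (\sigma \cap V_1)_n \cap (\inputs \cap V_1)$. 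What remains is exactly $f^c_1\bigl((\sigma \cap V_1)_1 \dots (\sigma \cap V_1)_{n-1}\bigr) = (\sigma \cap V_1)_n \cap (\inputs \cap V_1)$, which is precisely the compatibility condition of the projected word $\sigma \cap V_1$ with the counterstrategy $f^c_1$ (whose relevant input alphabet is $\inputs \cap V_1$).

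Having established this for all $n$, I would conclude $\sigma \cap V_1 \in \compatibleWords{f^c_1} \subseteq \overline{\mathcal{L}(s_1)}$, so $\sigma \cap V_1 \notin \mathcal{L}(s_1)$ and hence $\sigma \notin \mathcal{L}(s_1) \pc \mathcal{L}(s_2) = \mathcal{L}(s)$. Since $\sigma$ was arbitrary, this yields $\compatibleWords{f^c} \subseteq \overline{\mathcal{L}(s)}$, completing the proof.

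I expect the only genuine obstacle to be the bookkeeping in the projection step: one must track that the arbitrary padding $\mu$ lives in $\inputs \setminus V_1$ and thus disappears under intersection with $V_1$, and that the input variables relevant to $s_1$ are $\inputs \cap V_1$ rather than all of $\inputs$, so that the recovered equality is genuinely the compatibility condition for $f^c_1$. Everything else is routine. It is worth noting that the argument uses neither $s_2$ nor any realizability assumption, and that the hypothesis $V_1 \cap V_2 \subseteq \inputs$ is not actually needed for this lemma — it becomes relevant only when composing implementations rather than extending counterstrategies.
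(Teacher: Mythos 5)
Your proof is correct and takes essentially the same route as the paper's: both establish $\sigma \cap V_1 \in \compatibleWords{f^c_1}$ for every $\sigma \in \compatibleWords{f^c}$ by unfolding the construction of $f^c$, invoke $\compatibleWords{f^c_1} \subseteq \overline{\mathcal{L}(s_1)}$, and then propagate non-membership through $\mathcal{L}(s_1) \pc \mathcal{L}(s_2) = \mathcal{L}(s)$ (the paper phrases this last step by writing $\sigma = (\restrict{\sigma}{V_1}) \cup (\restrict{\sigma}{V_2})$ using $V_1 \cup V_2 = V$, whereas you use the equivalent projection characterization of membership in the composition). Your side observations---that the padding $\mu \in 2^{\inputs \setminus V_1}$ vanishes under intersection with $V_1$, and that the hypothesis $V_1 \cap V_2 \subseteq \inputs$ is not actually used in this particular lemma---are both accurate.
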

\begin{proof}
    Let $\sigma \in \compatibleWords{f^c}$. Then $f^c(\sigma_1 \dots \sigma_{n-1}) = \sigma_n \cap \inputs$ for all $n \in \mathbb{N}$ and hence, by construction of $f^c$, we have $f^c_1(\restrict{\sigma_1 \dots \sigma_{n-1}}{V_1})= \sigma_n \cap (\inputs \cap V_1)$. Thus, $\sigma \cap V_1 \in \compatibleWords{f^c_1}$ follows. 
    Since $f^c_1$ is a counterstrategy for $s_1$, we have $\compatibleWords{f^c_1} \subseteq \overline{\mathcal{L}(s_1)}$. Hence, $\sigma \cap V_1 \in \overline{\mathcal{L}(s_1)}$.
    By assumption, $\mathcal{L}(s_1) \pc \mathcal{L}(s_2) = \mathcal{L}(s)$ and thus $(\restrict{\sigma}{V_1}) \cup \sigma' \not\in \mathcal{L}(s)$ for any infinite word $\sigma' \in (2^{V_2})^\omega$. 
    Thus, in particular, $(\restrict{\sigma}{V_1}) \cup (\restrict{\sigma}{V_2}) \not\in \mathcal{L}(s)$ holds. 
    Since $V_1 \cup V_2 = V$, $(\restrict{\sigma}{V_1}) \cup (\restrict{\sigma}{V_2}) = \sigma$ follows. Thus, $\sigma \notin \mathcal{L}(s)$. Hence, for all $\sigma \in \compatibleWords{f^c}$, $\sigma \not \in \mathcal{L}(s)$ and thus $\compatibleWords{f^c} \subseteq \overline{\mathcal{L}(s)}$. Thereforde, $f^c$ is a counterstrategy for~$s$.\qed
\end{proof}

Soundness and completeness of modular synthesis depend on three requirements: Equirealizability of the initial specification and the subspecifications, non-con\-tra\-dic\-to\-ry composability of the subresults, and satisfaction of the initial specification by the parallel composition of the subresults.
Intuitively, these requirements are met if the decomposition algorithm neither introduces nor drops parts of the system specification and if it does not produce subspecifications that allow for contradictory implementations.
To obtain composability of the subresults, the implementations need to agree on shared variables. We ensure this by assigning disjoint sets of output variables to the synthesis subtasks: Since every subresult only defines the behavior of the assigned output variables, the implementations are non-contradictory. 
Since the language alphabets of the subspecifications thus differ, the composition of their languages is non-contradictory:

\begin{definition}[Language Composition]
	Let $L_1$, $L_2$ be languages over $2^{\Sigma_1}$ and $2^{\Sigma_2}$, respectively. The \emph{non-contradictory composition of $L_1$ and~$L_2$} is given by $L_1 \! \pc L_2 \! = \! \{ \sigma_1 \cup \sigma_2 \mid \sigma_1 \! \in \! L_1 \land \sigma_2 \! \in \! L_2 \land \restrict{\sigma_1}{\Sigma_2} = \restrict{\sigma_2}{\Sigma_1} \}$.
\end{definition}

The satisfaction of the initial specification by the composed subresults can be guaranteed by requiring the subspecifications to be independent sublanguages:

\begin{definition}[Independent Sublanguages]
	Let $L \subseteq (2^\Sigma)^\omega$, $L_1 \subseteq (2^{\Sigma_1})^\omega$, and $L_2 \subseteq (2^{\Sigma_2})^\omega$ be languages with $\Sigma_1, \Sigma_2 \subseteq \Sigma$ and $\Sigma_1 \cup \Sigma_2 = \Sigma$. Then, $L_1$ and $L_2$ are \emph{independent sublanguages} of $L$ if $L_1 \pc L_2 = L$ holds.
\end{definition}

From these two requirements, \ie, the subspecifications form non-contradictory and independent sublanguages, equirealizability of the initial specification and the subspecifications follows:

\begin{theorem}\label{thm:equisynthesizeability_independent_sublanguages}
	Let $s$, $s_1$, and $s_2$ be specifications with $\mathcal{L}(s) \subseteq (2^V)^\omega$, $\mathcal{L}(s_1) \subseteq (2^{V_1})^\omega$, $\mathcal{L}(s_2) \subseteq (2^{V_2})^\omega$. 
	Recall that $I \subseteq V$ is the set of input variables.
	If $V_1 \cap V_2 \subseteq I$ and $V_1 \cup V_2 = V$ hold, and $\mathcal{L}(s_1)$ and $\mathcal{L}(s_2)$ are independent sublanguages of~$\mathcal{L}(s)$, then $s$ is realizable if, and only if, both $s_1$ and $s_2$ are realizable.
\end{theorem}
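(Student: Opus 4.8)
The plan is to prove both directions of the biconditional, relying crucially on the hypothesis that $\mathcal{L}(s_1) \pc \mathcal{L}(s_2) = \mathcal{L}(s)$ and on the variable-partition conditions $V_1 \cap V_2 \subseteq I$ and $V_1 \cup V_2 = V$. First I would establish the contrapositive of the easier direction, namely that if either subspecification is unrealizable then $s$ is unrealizable. This follows almost immediately from \Cref{lem:extension_counterstrategy}: given a counterstrategy $f^c_1$ for (say) $s_1$, the lemma already constructs a counterstrategy $f^c$ for $s$ via \Cref{def:counterstrategy}, witnessing unrealizability of $s$. The symmetric argument handles an unrealizable $s_2$. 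Thus ``$s$ realizable $\Rightarrow$ both $s_1$ and $s_2$ realizable'' is settled by the counterstrategy machinery already in place.

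For the forward direction --- if both $s_1$ and $s_2$ are realizable, then $s$ is realizable --- I would take realizing implementations $f_1 : (2^{V_1})^* \times 2^{I \cap V_1} \rightarrow 2^{O \cap V_1}$ and $f_2 : (2^{V_2})^* \times 2^{I \cap V_2} \rightarrow 2^{O \cap V_2}$ and construct a single implementation $f$ for $s$. The key structural fact is that the output sets are disjoint: since $V_1 \cap V_2 \subseteq I$, the shared variables are inputs only, so $f_1$ and $f_2$ decide values for disjoint output variables and cannot contradict each other. Concretely I would define $f$ on a history $\sigma \in (2^V)^*$ and current input by letting each $f_j$ read the projection $\sigma \cap V_j$ of the history and the relevant portion of the input, and then set $f(\sigma, i) = f_1(\sigma \cap V_1, i \cap V_1) \cup f_2(\sigma \cap V_2, i \cap V_2)$. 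Because $O \cap V_1$ and $O \cap V_2$ are disjoint and together exhaust $O$ (as $V_1 \cup V_2 = V$), this union is a well-defined element of $2^O$.

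It remains to verify that $f$ realizes $s$, i.e.\ that every $\sigma \in \compatibleWords{f}$ lies in $\mathcal{L}(s)$. The main technical step is to show that a word compatible with the composed implementation projects to words compatible with the components: if $\sigma \in \compatibleWords{f}$, then $\sigma \cap V_1 \in \compatibleWords{f_1}$ and $\sigma \cap V_2 \in \compatibleWords{f_2}$. This should follow by unfolding the compatibility condition $f(\sigma_1 \dots \sigma_{n-1}, \sigma_n \cap I) = \sigma_n \cap O$ and intersecting both sides with $V_1$ (respectively $V_2$), using the disjointness of outputs so that the $V_1$-part of the output is exactly what $f_1$ produces. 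Since $f_1$ realizes $s_1$ we get $\sigma \cap V_1 \in \mathcal{L}(s_1)$, and likewise $\sigma \cap V_2 \in \mathcal{L}(s_2)$. Finally, because these two projections agree on the shared alphabet --- they are both restrictions of the same $\sigma$, so $(\sigma \cap V_1) \cap V_2 = (\sigma \cap V_2) \cap V_1 = \sigma \cap (V_1 \cap V_2)$ --- the pair $(\sigma \cap V_1, \sigma \cap V_2)$ meets the matching condition in the definition of $\pc$, so $\sigma = (\sigma \cap V_1) \cup (\sigma \cap V_2) \in \mathcal{L}(s_1) \pc \mathcal{L}(s_2) = \mathcal{L}(s)$.

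I expect the main obstacle to be the projection-of-compatibility bookkeeping in the forward direction: carefully justifying that $\sigma \cap V_1$ is genuinely compatible with $f_1$ requires exploiting the output-disjointness precisely (so that $f_1$'s output on the projected history is unaffected by the $V_2 \setminus V_1$ outputs chosen by $f_2$), together with the fact that $f_1$ only ever reads $V_1$-information. The agreement-on-shared-variables condition for $\pc$ is then automatic because both components' runs are induced by a single global word $\sigma$, which is the real payoff of constructing $f$ as a genuine union rather than as two independent simulations.
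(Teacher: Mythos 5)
Your proposal is correct and follows essentially the same route as the paper's own proof: the forward direction constructs the same union implementation $f(\sigma,\inp{i}) = f_1(\restrict{\sigma}{V_1},\inp{i}\cap V_1) \cup f_2(\restrict{\sigma}{V_2},\inp{i}\cap V_2)$ and argues that compatible words project to compatible words of the components before invoking $\mathcal{L}(s_1) \pc \mathcal{L}(s_2) = \mathcal{L}(s)$, while the backward direction defers to \Cref{lem:extension_counterstrategy} exactly as the paper does. The only cosmetic difference is that you argue directly with the projections $\restrict{\sigma}{V_i}$ where the paper introduces auxiliary sequences $\sigma'$ and $\sigma''$, which is the same argument stated more cleanly.
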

\begin{proof}
	First, suppose that $s_1$ and $s_2$ are realizable. Let $f_1: (2^{V_1})^* \times 2^{\inputs \cap V_1} \rightarrow 2^{\outputs \cap V_1}$, $f_2: (2^{V_2})^* \times 2^{\inputs \cap V_2} \rightarrow 2^{\outputs \cap V_2}$ be implementations realizing $s_1$ and $s_2$, respectively.
    We construct an implementation $f: (2^V)^* \times 2^\inputs \rightarrow 2^\outputs$ from $f_1$ and $f_2$: $f(\sigma,\inp{i}) := f_1(\restrict{\sigma}{V_1},\inp{i}\cap V_1) \cup f_2(\restrict{\sigma}{V_2}, \inp{i} \cap V_2)$. 	
    Let $\sigma \in \compatibleWords{f}$.
    Hence, $f((\sigma_1 \dots \sigma_{n-1}), \sigma_n \cap \inputs) = \sigma_n \cap \outputs$ for all $n \in \mathbb{N}$. 
    Let $\sigma' \in (2^{V_1})^\omega$, $\sigma'' \in (2^{V_2})^\omega$ be sequences with $\sigma'_n \cap \outputs =  f_1((\restrict{\sigma_1 \dots \sigma_{n-1}}{V_1}), \sigma_n \cap (\inputs \cap V_1))$ and $\sigma''_n \cap \outputs =  f_2((\restrict{\sigma_1 \dots \sigma_{n-1}}{V_2}), \sigma_n \cap (\inputs \cap V_2))$, respectively, for all $n \in \mathbb{N}$.
    Then, $\sigma'_n \cup \sigma''_n = \sigma_n \cap \outputs$ for all $n \in \mathbb{N}$ follows by construction of $f$ and thus $\sigma = \sigma' \cup \sigma''$ holds.
    Further, $\sigma' \in \compatibleWords{f_1}$ and $\sigma'' \in \compatibleWords{f_2}$ and thus, since~$s_1$ and $s_2$ are realizable by assumption, $\sigma' \in \mathcal{L}(s_1)$ and $\sigma'' \in \mathcal{L}(s_2)$. Since $\mathcal{L}(s_1)$ and $\mathcal{L}(s_2)$ are independent sublanguages by assumption, $\mathcal{L}(s_1) \pc \mathcal{L}(s_2) = \mathcal{L}(s)$ holds. Hence, by definition of language composition, $\sigma_1 \cup \sigma_2 \in \mathcal{L}(s)$ follows and thus, $\sigma \in \mathcal{L}(s)$ holds.
    Hence, for all $\sigma \in \compatibleWords{f}$, $\sigma \in \mathcal{L}(s)$ and therefore $f$ realizes $s$.
    
	Second, let $s_i$ is unrealizable for some $i \in \{1,2\}$ and let $f^c_i: (2^V)^* \rightarrow 2^{\inputs \cap V_1}$ be a counterstrategy for~$s_i$.
    We construct a counterstrategy $f^c: (2^V)^* \rightarrow 2^\inputs$ from $f^c_i$ as described in \Cref{def:counterstrategy}. By \Cref{lem:extension_counterstrategy},~$f^c$ is a counterstrategy for $s$. Thus, $s$ is unrealizable.\qed
\end{proof}

The soundness and completeness of \Cref{alg:compositional_synthesis} for adequate decomposition algorithms now follows directly with \Cref{thm:equisynthesizeability_independent_sublanguages} and the properties of such algorithms described above: They produce subspecifications that (1) do not share output variables and that (2) form independent sublanguages of the initial specification. 

\begin{theorem}
\label{thm:soundness_completeness}
	Let $s$ be a specification. Moreover, let $\mathcal{S} = \{s_1, \dots, s_k\}$ be a set of subspecifications of $s$ with $\mathcal{L}(s_i) \subseteq (2^{V_i})^\omega$ such that $\bigcup_{1 \leq i \leq k} V_i = V$, $V_i \cap V_j \subseteq \inputs$ for $1 \leq i,j \leq k$ with $i \neq j$, and such that $\mathcal{L}(s_1), \dots, \mathcal{L}(s_k)$ are independent sublanguages of $\mathcal{L}(s)$.
	If $s$ is realizable, \Cref{alg:compositional_synthesis} yields an implementation realizing $s$.
	Otherwise, \Cref{alg:compositional_synthesis} yields a counterstrategy for~$s$.
\end{theorem}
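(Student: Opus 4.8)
The plan is to reduce the general statement about a set of $k$ subspecifications to the two-variable results already established, namely \Cref{thm:equisynthesizeability_independent_sublanguages} and \Cref{lem:extension_counterstrategy}, and then to trace the control flow of \Cref{alg:compositional_synthesis}. First I would argue equirealizability: although \Cref{thm:equisynthesizeability_independent_sublanguages} is stated for two subspecifications, the hypotheses here are exactly its generalization to $k$ conjuncts. I would either invoke an inductive argument---grouping $s_1, \dots, s_k$ as $s_1$ against a combined specification $s_{2,\dots,k}$ whose language is $\mathcal{L}(s_2) \pc \dots \pc \mathcal{L}(s_k)$, and checking that the variable-disjointness and independence hypotheses are preserved under this grouping---or simply restate the two-specification proof for the finite composition, since the construction of the combined implementation $f(\sigma, \inp{i}) := \bigcup_{1 \leq i \leq k} f_i(\restrict{\sigma}{V_i}, \inp{i} \cap V_i)$ generalizes verbatim. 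The associativity and commutativity of $\pc$, which follow directly from its set-builder definition, justify the grouping.

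Next I would handle the two branches of the algorithm separately. For the realizable case, I would observe that by the equirealizability direction just established, $s$ realizable implies every $s_i$ is realizable, so each synthesis subtask on \Cref{alg:compositional_synthesis:synthesize} returns a realizing implementation $f_i$; the loop checking \var{real} is never entered, and the algorithm returns \texttt{compose} applied to the $f_i$. I would then identify this \texttt{compose} operation with the pointwise-union construction $f$ from the proof of \Cref{thm:equisynthesizeability_independent_sublanguages}, invoking condition (1)---pairwise output-disjointness $V_i \cap V_j \subseteq \inputs$---to ensure the union is well-defined and non-contradictory, and condition (2)---independent sublanguages---to conclude that every compatible word of $f$ lies in $\mathcal{L}(s)$. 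Hence $f$ realizes $s$.

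For the unrealizable case, I would argue contrapositively from equirealizability: if $s$ is unrealizable then at least one $s_i$ is unrealizable, so synthesis returns a counterstrategy $f^c_i$ for that $s_i$ and $\var{real} = \bot$ for that iteration. The algorithm then calls \texttt{extendCounterStrategy} on \Cref{alg:compositional_synthesis:counterstrategy}, which I would identify with the construction of \Cref{def:counterstrategy}. To apply \Cref{lem:extension_counterstrategy} I must exhibit the required partition into $V_1 = V_i$ and a complementary $V_2 = \bigcup_{j \neq i} V_j$, verify $V_1 \cup V_2 = V$ and $V_1 \cap V_2 \subseteq \inputs$ (both inherited from the theorem hypotheses), and confirm that $\mathcal{L}(s_i) \pc \mathcal{L}(s_{\text{rest}}) = \mathcal{L}(s)$ follows from the $k$-fold independence together with associativity of $\pc$. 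The lemma then yields that the extended $f^c$ is a counterstrategy for $s$, which is exactly what the algorithm returns.

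The main obstacle I anticipate is the bookkeeping of the lift from two subspecifications to $k$: I need the non-contradictory composition $\pc$ to be associative and commutative as an operation on languages over differing alphabets, and I must confirm that when I group the remaining $s_j$ into a single specification $s_{\text{rest}}$, the pairwise conditions $V_i \cap V_j \subseteq \inputs$ combine into the single condition $V_1 \cap V_2 \subseteq \inputs$ needed by the lemma---this requires that no output variable is shared across the grouping boundary, which is precisely guaranteed by pairwise output-disjointness. Once this associativity and the preservation of hypotheses under grouping are in hand, both branches follow mechanically, so I expect the proof to be short, essentially a careful invocation of the earlier results applied to the specific partition induced by picking out one unrealizable conjunct.
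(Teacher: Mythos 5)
Your proposal is correct and follows essentially the same route as the paper's own proof: both reduce the $k$-ary statement to \Cref{thm:equisynthesizeability_independent_sublanguages} by recursive/inductive grouping (the paper compresses this into ``applying \Cref{thm:equisynthesizeability_independent_sublanguages} recursively''), then trace the two branches of \Cref{alg:compositional_synthesis}, using output-disjointness for non-contradictory composition in the realizable case and \Cref{lem:extension_counterstrategy} for the extended counterstrategy in the unrealizable case. If anything, you are more explicit than the paper about the bookkeeping it leaves implicit---the associativity of $\pc$ and the preservation of the variable and independence hypotheses when grouping the remaining subspecifications into one.
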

\begin{proof}
	First, let $s$ be realizable. Then, by applying \Cref{thm:equisynthesizeability_independent_sublanguages} recursively, it follows that $s_i$ is realizable for all $s_i \in \mathcal{S}$. Since $V_i \cap V_j \subseteq \inputs$ holds for any $s_i,s_j \in \mathcal{S}$ with $i \neq j$, the implementations realizing $s_1, \dots, s_k$ are non-contradictory. Hence, \Cref{alg:compositional_synthesis} returns their composition: Implementation~$f$. 
	Since $V_1 \cup \dots \cup V_k = V$,~$f$ defines the behavior of all outputs. 
	By construction, $f$ realizes all $s_i \in \mathcal{S}$. Since the $\mathcal{L}(s_i)$ are non-contradictory, independent sublanguages of $\mathcal{L}(s)$, $f$ thus realizes~$s$.
	
	Next, let $s$ be unrealizable. Then, by applying \Cref{thm:equisynthesizeability_independent_sublanguages} recursively, $s_i$ is unrealizable for some $s_i \in \mathcal{S}$. Thus, \Cref{alg:compositional_synthesis} returns the extension of $s_i$'s counterstrategy to a counterstrategy for the full specification. Its correctness follows with \Cref{lem:extension_counterstrategy}.\qed
\end{proof}

\section{Decomposition of Büchi Automata}\label{sec:automata}

To ensure soundness and completeness of modular synthesis, a specification decomposition algorithm needs to meet the language-based adequacy conditions of \Cref{thm:equisynthesizeability_independent_sublanguages}.
In this section, we lift these conditions from the language level to nondeterministic Büchi automata and present a decomposition algorithm for specifications given as NBAs on this basis.
Since the algorithm works directly on NBAs and not on their languages, we consider their composition instead of the composition of their languages:
Let $\mathcal{A}_1 = (Q_1,Q^1_0,\delta_1,F_1)$ and $\mathcal{A}_2 = (Q_2,Q^2_0,\delta_2,F_2)$ be NBAs over $2^{V_1}$, $2^{V_2}$, respectively. The \emph{parallel composition of $\mathcal{A}_1$ and $\mathcal{A}_2$} is defined by the NBA $\mathcal{A}_1 \pc \mathcal{A}_2 = (Q,Q_0,\delta,F)$ over~$2^{V_1 \cup V_2}$ with $Q = Q_1 \times Q_2$, $Q_0 = Q^1_0 \times Q^2_0$, $((q_1,q_2), \inp{i}, (q'_1,q'_2)) \in \delta$ if, and only if, $(q_1,\inp{i} \cap V_1,q'_1) \in \delta_1$ and $(q_2,\inp{i} \cap V_2,q'_2) \in \delta_2$, and $F = F_1 \times F_2$.
The parallel composition of NBAs reflects the composition of their languages:

\begin{lemma}\label{lem:correctness_parallel_composition_automata}
	Let $\mathcal{A}_1$ and $\mathcal{A}_2$ be NBAs over alphabets $2^{V_1}\!$ and $2^{V_2}\!$. Then, $\mathcal{L}(\mathcal{A}_1 \pc \mathcal{A}_2) = \mathcal{L}(\mathcal{A}_1) \pc \mathcal{L}(\mathcal{A}_2)$ holds.
\end{lemma}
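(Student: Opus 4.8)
The plan is to prove both inclusions by reasoning directly about runs, exploiting the bijection between runs of $\mathcal{A}_1 \pc \mathcal{A}_2$ and pairs of runs of $\mathcal{A}_1$ and $\mathcal{A}_2$. A run $\rho = (p_1,q_1)(p_2,q_2)\cdots$ of a word $\sigma = \sigma_1 \sigma_2 \cdots$ on $\mathcal{A}_1 \pc \mathcal{A}_2$ factorises into its projections $\rho_1 = p_1 p_2 \cdots$ and $\rho_2 = q_1 q_2 \cdots$, and the definitions of $Q_0$, $\delta$, and $F$ are tailored so that $\rho$ is meant to be a legal (resp.\ accepting) run exactly when $\rho_1$ and $\rho_2$ are. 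The bookkeeping on letters will rely only on the set-algebraic facts that $\sigma \cap V_1$ and $\sigma \cap V_2$ agree on $V_1 \cap V_2$ and that $(\sigma \cap V_1) \cup (\sigma \cap V_2) = \sigma \cap (V_1 \cup V_2) = \sigma$, the last equality because $\sigma$ is a word over $2^{V_1 \cup V_2}$.

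For the inclusion $\mathcal{L}(\mathcal{A}_1 \pc \mathcal{A}_2) \subseteq \mathcal{L}(\mathcal{A}_1) \pc \mathcal{L}(\mathcal{A}_2)$, I would take $\sigma \in \mathcal{L}(\mathcal{A}_1 \pc \mathcal{A}_2)$ with an accepting run $\rho$ and set $\alpha := \sigma \cap V_1$ and $\beta := \sigma \cap V_2$. By the definition of $\delta$, each product transition $((p_j,q_j),\sigma_j,(p_{j+1},q_{j+1}))$ entails $(p_j,\sigma_j \cap V_1, p_{j+1}) \in \delta_1$ and $(q_j,\sigma_j \cap V_2, q_{j+1}) \in \delta_2$, so $\rho_1$ and $\rho_2$ are runs of $\alpha$ and $\beta$. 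Since $F = F_1 \times F_2$, every product state in $F$ is accepting in both coordinates, so $\rho$ meeting $F$ infinitely often forces each $\rho_i$ to meet $F_i$ infinitely often; hence $\alpha \in \mathcal{L}(\mathcal{A}_1)$ and $\beta \in \mathcal{L}(\mathcal{A}_2)$. Compatibility $\alpha \cap V_2 = \beta \cap V_1$ and $\alpha \cup \beta = \sigma$ are immediate, so $\sigma \in \mathcal{L}(\mathcal{A}_1) \pc \mathcal{L}(\mathcal{A}_2)$. This direction goes through unconditionally.

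For the reverse inclusion I would start from $\sigma = \alpha \cup \beta$ with $\alpha \in \mathcal{L}(\mathcal{A}_1)$, $\beta \in \mathcal{L}(\mathcal{A}_2)$, and $\alpha \cap V_2 = \beta \cap V_1$, fix accepting runs $\rho_1$ of $\alpha$ and $\rho_2$ of $\beta$, and combine them positionwise into $\rho = (p_1,q_1)(p_2,q_2)\cdots$. The transition check reduces to $\sigma_j \cap V_1 = \alpha_j$ and $\sigma_j \cap V_2 = \beta_j$, which follow by routine set algebra from $\sigma_j = \alpha_j \cup \beta_j$, $\alpha_j \subseteq V_1$, $\beta_j \subseteq V_2$, and the compatibility equation (e.g.\ $\sigma_j \cap V_1 = \alpha_j \cup (\beta_j \cap V_1) = \alpha_j \cup (\alpha_j \cap V_2) = \alpha_j$); so $\rho$ is a legal run of $\sigma$ on $\mathcal{A}_1 \pc \mathcal{A}_2$.

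The hard part — and the only genuinely delicate point — is the acceptance of $\rho$. With $F = F_1 \times F_2$, the run $\rho$ is accepting only if $\rho_1$ and $\rho_2$ visit $F_1$ and $F_2$ at \emph{common} positions infinitely often, which does not follow from each $\rho_i$ being individually accepting: two B\"uchi runs may hit their accepting sets on disjoint (say, alternating) index sets and never coincide. Hence the reverse inclusion cannot be closed for arbitrary NBAs with the stated product acceptance, and I expect the argument to need extra structure supplied by the decomposition — for instance one component being a safety-type automaton with $F_i = Q_i$, so that $F_1 \times F_2$ collapses to the acceptance condition of the other — or else a replacement of $F = F_1 \times F_2$ by the standard intersection construction that alternates a phase bit between the two components. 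This acceptance-synchronisation step is where I anticipate the main obstacle.
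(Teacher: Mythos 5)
Your analysis is correct on both counts, and the ``delicate point'' you flag is not a gap in your argument but a genuine defect in the statement itself. Your first inclusion is exactly the paper's argument for $\mathcal{L}(\mathcal{A}_1 \pc \mathcal{A}_2) \subseteq \mathcal{L}(\mathcal{A}_1) \pc \mathcal{L}(\mathcal{A}_2)$ and is sound. For the reverse inclusion, the paper's proof consists of the single assertion that, since $\sigma_1$ and $\sigma_2$ agree on shared variables, $\sigma_1 \cup \sigma_2$ is accepted by $\mathcal{A}_1 \pc \mathcal{A}_2$; this silently combines the two accepting runs positionwise and assumes the combined run visits $F_1 \times F_2$ infinitely often, which is precisely the synchronization property you observe need not hold. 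Your suspicion hardens into a counterexample: take $V_1 = \{a\}$, $V_2 = \{b\}$, let $\mathcal{A}_1$ be the two-state deterministic Büchi automaton for $\Globally\Eventually a$ (accepting state entered exactly upon reading $\{a\}$) and $\mathcal{A}_2$ the analogous automaton for $\Globally\Eventually b$. The word $\sigma = \{a\}\{b\}\{a\}\{b\}\cdots$ lies in $\mathcal{L}(\mathcal{A}_1) \pc \mathcal{L}(\mathcal{A}_2)$ (compatibility is vacuous since $V_1 \cap V_2 = \emptyset$), but the unique run of the product on $\sigma$ alternates between $(p_1,q_0)$-type and $(p_0,q_1)$-type states and never visits $F_1 \times F_2$, so $\sigma \notin \mathcal{L}(\mathcal{A}_1 \pc \mathcal{A}_2)$. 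This is the classical fact that the naive product does not realize Büchi intersection.

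The repairs you propose are the right ones: replacing $F = F_1 \times F_2$ by the standard phase-bit intersection construction, or equivalently by the generalized Büchi condition $\{F_1 \times Q_2,\, Q_1 \times F_2\}$, makes the lemma true, and then your positionwise run-combination argument closes the reverse inclusion. You are also right that special structure can rescue the naive product: if $\mathcal{A}_1$ and $\mathcal{A}_2$ are projections of a \emph{common} automaton (as in \Cref{alg:automaton-based_decomposition}), the inclusion $\mathcal{L}(\mathcal{A}) \subseteq \mathcal{L}(\mathcal{A}_X \pc \mathcal{A}_Y)$ holds because both components may use the \emph{same} diagonal run, which is trivially synchronized. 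Note, however, that the defect is consequential for the paper: \Cref{thm:correctness_automaton_decomposition} invokes this lemma to conclude that $\mathcal{L}(\mathcal{A}_X)$ and $\mathcal{L}(\mathcal{A}_Y)$ are independent sublanguages once the check $\mathcal{L}(\mathcal{A}_X \pc \mathcal{A}_Y) \subseteq \mathcal{L}(\mathcal{A})$ passes, and with the naive product this can fail --- for $\mathcal{A}$ recognizing $\Globally\Eventually(a \land b)$, the projections recognize $\Globally\Eventually a$ and $\Globally\Eventually b$, the check passes since the naive product again recognizes $\Globally\Eventually(a \land b)$, yet $\mathcal{L}(\mathcal{A}_X) \pc \mathcal{L}(\mathcal{A}_Y)$ is the strictly larger language $\Globally\Eventually a \land \Globally\Eventually b$, so composing independently synthesized implementations (e.g., ones asserting $a$ and $b$ at alternating positions) need not satisfy the original specification. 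So your proposal should not be ``fixed'' to match the paper; rather, the lemma and the product construction need the corrected acceptance condition, after which both your proof and the downstream results go through.
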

\begin{proof}
	First, let $\sigma \in \mathcal{L}(\mathcal{A}_1 \pc \mathcal{A}_2)$. Then, $\sigma$ is accepted by $\mathcal{A}_1 \pc \mathcal{A}_2$. Hence, by definition of automaton composition, for $i \in \{1,2\}$, $\restrict{\sigma}{V_i}$ is accepted by~$\mathcal{A}_i$. Thus, $\restrict{\sigma}{V_i} \in \mathcal{L}(\mathcal{A}_i)$.
	Since $\restrict{(\restrict{\sigma}{V_1})}{V_2} = \restrict{(\restrict{\sigma}{V_2})}{V_1}$, we have $(\restrict{\sigma}{V_1}) \cup (\restrict{\sigma}{V_2}) \in \mathcal{L}(\mathcal{A}_1) \pc \mathcal{L}(\mathcal{A}_2)$.
	By definition of automaton composition, $\sigma \in (2^{V_1 \cup V_2})^\omega$ and thus $\sigma = (\restrict{\sigma}{V_1}) \cup (\restrict{\sigma}{V_2})$. Hence, $\sigma \in \mathcal{L}(\mathcal{A}_1) \pc \mathcal{L}(\mathcal{A}_2)$.
	
	Next, let $\sigma \in \mathcal{L}(\mathcal{A}_1) \pc \mathcal{L}(\mathcal{A}_2)$. Then, for $\sigma_1 \in (2^{V_1})^\omega$, $\sigma_2 \in (2^{V_2})^\omega$ with $\sigma = \sigma_1 \cup \sigma_2$, we have $\sigma_i \in \mathcal{L}(\mathcal{A}_i)$ for $i \in \{1,2\}$ and $\restrict{\sigma_1}{V_2} = \restrict{\sigma_2}{V_1}$.
	Hence,~$\sigma_i$ is accepted by $\mathcal{A}_i$. Thus, by definition of automaton composition and since $\sigma_1$ and $\sigma_2$ agree on shared variables, $\sigma_1 \cup \sigma_2$ is accepted by $\mathcal{A}_1 \pc \mathcal{A}_2$. Thus, $\sigma_1 \cup \sigma_2 \in \mathcal{L}(\mathcal{A}_1 \pc \mathcal{A}_2)$ and hence $\sigma \in \mathcal{L}(\mathcal{A}_1 \pc \mathcal{A}_2)$ holds.\qed
\end{proof}

Using the above lemma, we can formalize the independent sublanguage criterion on NBAs directly: Two automata $\mathcal{A}_1$, $\mathcal{A}_2$ are \emph{independent subautomata} of $\mathcal{A}$ if $\mathcal{A} = \mathcal{A}_1 \pc \mathcal{A}_2$.
To apply \Cref{thm:equisynthesizeability_independent_sublanguages}, the alphabets of the subautomata may not share output variables.
Our decomposition algorithm achieves this by constructing the subautomata from the initial automaton by projecting to disjoint sets of outputs.
Intuitively, the projection to a set $X$ abstracts from the variables outside of $X$. Hence, it only captures the parts of the initial specification concerning the variables in $X$. Formally:
Let $\mathcal{A} = (Q,Q_0,\delta,F)$ be an NBA over alphabet~$2^V$ and let $X \subset V$. 
The \emph{projection of $\mathcal{A}$ to $X$} is the NBA $\project{\mathcal{A}}{X} = (Q,Q_0,\pi_X(\delta),F)$ over~$2^X$ with $\pi_X(\delta) = \{ (q,a,q') \mid \exists~ b \in 2^{V \setminus X}.~(q,a \cup b,q')\in\delta\}$.

\begin{algorithm}[t]
	\SetKwInput{KwData}{Input}
	\SetKwInOut{KwResult}{Output}
	\SetKw{KwBy}{by}
	
	\KwData{$\mathcal{A}$: NBA, \var{inp}: List Variable, \var{out}: List Variable}
	\KwResult{\var{subautomata}: List (NBA, List Variable, List Variable)}
	
	\If{\upshape{isNull \var{checkedSubsets}}}{
		\var{checkedSubsets} $\leftarrow$ $\emptyset$
	}
	\var{subautomata} $\leftarrow$ [($\mathcal{A}$, \var{inp}, \var{out})] \\
	\ForEach{\upshape{\var{X} $\subset$ \var{out}}}{\label{alg:automaton-based_decomposition:guess}
		\var{Y} $\leftarrow$ \var{out}$\setminus$\var{X} \\
		\If{\upshape{\var{X} $\not\in$ \var{checkedSubsets} $\land$ \var{Y} $\not\in$ \var{checkedSubsets}}}{\label{alg:automaton-based_decomposition:unchecked}
			$\mathcal{A}_\var{X}$ $\leftarrow$ $\project{\mathcal{A}}{\var{X} \cup \var{inp}}$ \\
			$\mathcal{A}_\var{Y}$ $\leftarrow$ $\project{\mathcal{A}}{\var{Y} \cup \var{inp}}$ \\
			\If{\upshape{$\mathcal{L}(\mathcal{A}_\var{X}$ $\pc$ $\mathcal{A}_\var{Y})$~ $\subseteq$ $\mathcal{L}(\mathcal{A})$}}{\label{alg:automaton-based_decomposition:if}
				\var{subautomata} $\leftarrow$ decompose($\mathcal{A}_\var{X}$, \var{inp}, \var{X}) $++$ decompose($\mathcal{A}_\var{Y}$, \var{inp}, \var{Y})\label{alg:automaton-based_decomposition:add}\\
				break
			}
		}
		\var{checkedSubsets} $\leftarrow$ \var{checkedSubsets} $\cup$ $\{\var{X},\var{Y}\}$\label{alg:automaton-based_decomposition:store}
	}
	\Return{\upshape{\var{subautomata}}}
	\caption{Automaton Decomposition}\label{alg:automaton-based_decomposition}
\end{algorithm}

The decomposition algorithm for NBAs is described in \Cref{alg:automaton-based_decomposition}. It is a recursive algorithm that, starting with the initial automaton $\mathcal{A}$, guesses a subset~$\var{X}$ of the output variables $\var{out}$. It abstracts from the output variables outside of $\var{X}$ by building the projection $\mathcal{A}_\var{X}$ of~$\mathcal{A}$ to $\var{X} \cup \var{inp}$, where $\var{inp}$ is the set of input variables. Similarly, it builds the projection $\mathcal{A}_\var{Y}$ of~$\mathcal{A}$ to $\var{Y} := (\var{out} \setminus \var{X}) \cup \var{inp}$. By construction of $\mathcal{A}_\var{X}$ and $\mathcal{A}_\var{Y}$ and since both $\var{X} \cap \var{Y} = \emptyset$ and $\var{X} \cup \var{Y} = \var{out}$ hold, we have $\mathcal{L}(\mathcal{A}) \subseteq \mathcal{L}(\mathcal{A}_\var{X}$ $\pc$ $\mathcal{A}_\var{Y})$. Hence, if $\mathcal{L}(\mathcal{A}_\var{X}$ $\pc$ $\mathcal{A}_\var{Y}) \subseteq \mathcal{L}(\mathcal{A})$ holds, then $\mathcal{A}_\var{X}$ $\pc$ $\mathcal{A}_\var{Y}$ is equivalent to $\mathcal{A}$ and therefore $\mathcal{L}(\mathcal{A}_\var{X})$ and $\mathcal{L}(\mathcal{A}_\var{Y})$ are independent sublanguages of $\mathcal{L}(\mathcal{A})$. Thus, since $\var{X}$ and~$\var{Y}$ are disjoint and therefore $\mathcal{A}_\var{X}$ and $\mathcal{A}_\var{Y}$ do not share output variables, $\mathcal{A}_\var{X}$ and $\mathcal{A}_\var{Y}$ are a valid decomposition of~$\mathcal{A}$.
The subautomata are then decomposed recursively. If no further decomposition is possible, the algorithm returns the subautomata.
By only considering unexplored subsets of output variables, no subset combination $\var{X}, \var{Y}$ is checked twice.

\begin{figure}
	\centering
		\begin{tikzpicture}[>=latex,shorten >=0pt,auto,->,initial text = ,node distance=1cm,thin,every edge/.style={draw,font=\normalsize}]
		
			\node[state,initial]		(q0)		at (0,0)		{$q_0$};
			\node[state]				(q1)		at (-3,-3)	{$q_1$};
			\node[state,accepting]	(q2)		at (0,-2.3)	{$q_2$};
			\node[state]				(q3)		at (3,-3)	{$q_3$};

			\path 	(q0) edge [loop right]		node[align=left]		{\small$(\neg i \land \neg o_1)$ \\ \small$\lor~(\neg o_1 \land o_2)$} (q0)
						 edge					node[align=left,pos=0.6]	{\small$(\neg i \land o_1)$  \\ \small$\lor~ (o_1 \land o_2)$} (q2)
						 edge [bend left=20,sloped]		node[above,pos=0.65]		{\small$i \land o_1 \land \neg o_2$} (q3)
						 edge [bend right=20,sloped]				node[above]		{\small$i \land \neg o_1 \land \neg o_2$} (q1)
					(q1)	 edge [bend right=20,sloped]		node		{\small$\neg o_1 \land o_2$} (q0)
						 edge [loop below]				node		{\small$\neg o_1 \land \neg o_2$} (q1)
						 edge [sloped]					node	[below]	{\small$o_1 \land o_2$} (q2)
						 edge [bend right=28]			node	[below]	{\small$o_1 \land \neg o_2$} (q3)
					(q2) edge[loop below]				node		{\small$\neg i \lor o_2$} (q2)
						 edge [bend left=20,sloped]		node		{\small$i \land \neg o_2$} (q3)
					(q3) edge [bend left=20,sloped]		node	[below]	{\small$o_2$} (q2)
						 edge[loop below]				node		{\small$\neg o_2$} (q3);
		\end{tikzpicture}
		\caption{NBA $\mathcal{A}$ for $\varphi = \protect\Eventually o_1 \land \protect\Globally(i \rightarrow \protect\Eventually o_2)$. Accepting states are marked with double circles.}\label{fig:initial_automaton_example}
\end{figure}

\begin{figure}[t]
	\centering
	\begin{subfigure}[t]{0.23\textwidth}
	\centering
	\begin{tikzpicture}[>=latex,shorten >=0pt,auto,->,initial text = ,node distance=1cm,thin,every edge/.style={draw,font=\normalsize}]
			\path[use as bounding box] (-0.8,1.25) rectangle (2.4,-0.55);
		
			\node[state,initial]		(q0)		at (0,0)		{$q_0$};
			\node[state,accepting]	(q1)		at (2,0)		{$q_1$};
			
			\path	(q0) edge [loop above]		node		{\small$\neg o_1$}	(q0)
						 edge					node		{\small$o_1$}		(q1)
					(q1) edge [loop above]		node		{\small$\top$}		(q1);	
	\end{tikzpicture}
	\caption{Minimization of $\project{\mathcal{A}}{V_1}$.}\label{fig:projection_1_min}
	\end{subfigure}
	\hfill
	\begin{subfigure}[t]{0.23\textwidth}
	\centering
	\begin{tikzpicture}[>=latex,shorten >=0pt,auto,->,initial text = ,node distance=1cm,thin,every edge/.style={draw,font=\normalsize}]
			\path[use as bounding box] (-0.8,1.25) rectangle (2.4,-0.55);
		
			\node[state,initial,accepting]		(q0)		at (0,0)		{$q_0$};
			\node[state]							(q1)		at (2,0)		{$q_1$};
			
			\path	(q0) edge [loop above]		node		{\small$\neg i \lor o_2$}	(q0)
						 edge [bend left=20]		node		{\small$i \land \neg o_2$}	(q1)
					(q1) edge [loop above]		node		{\small$\neg o_2$}			(q1)
						 edge [bend left=20]		node		{\small$o_2$}				(q0);
	\end{tikzpicture}
	\caption{Minimization of $\project{\mathcal{A}}{V_2}$.}\label{fig:projection_2_min}
	\end{subfigure}
	\caption{Minimized NBAs for the projections $\project{\mathcal{A}}{V_1}$ and $\project{\mathcal{A}}{V_2}$ of the NBA $\mathcal{A}$ from \Cref{fig:initial_automaton_example} to the sets of variables $V_1 = \{i,o_1\}$ and $V_2 = \{i,o_2\}$, respectively. Accepting states are marked with double circles.}\label{fig:projections_example_minimized}
\end{figure}

As an example for the specification decomposition algorithm based on NBAs, consider the specification $\varphi=\Eventually o_1 \land \Globally (i \rightarrow \Eventually o_2)$ for inputs $\inputs = \{i\}$ and outputs $\outputs = \{o_1,o_2\}$. 
The NBA~$\mathcal{A}$ that accepts $\mathcal{L}(\varphi)$ is depicted in \Cref{fig:initial_automaton_example}. 
The (minimized) subautomata obtained with \Cref{alg:automaton-based_decomposition} are shown in \Cref{fig:projection_1_min,fig:projection_2_min}. Clearly, $V_1 \cap V_2 \subseteq \inputs$ holds. Moreover, their parallel composition is exactly $\mathcal{A}$ depicted in \Cref{fig:initial_automaton_example} and therefore their parallel composition accepts exactly those words that satisfy $\varphi$.
For a slightly modified specification $\varphi' = \Eventually o_1 \lor \Globally (i \rightarrow \Eventually o_2)$, however, \Cref{alg:automaton-based_decomposition} does not decompose the NBA $\mathcal{A}'$ with $\mathcal{L}(\mathcal{A}') = \mathcal{L}(\varphi')$ depicted in \Cref{fig:initial_automaton_example_2}: The only possible decomposition is $\var{X} = \{o_1\}$, $\var{Y} = \{o_2\}$ (or vice-versa), yielding NBAs $\mathcal{A}'_\var{X}$ and $\mathcal{A}'_\var{Y}$ that accept every infinite word. Clearly, $\mathcal{L}(\mathcal{A}'_\var{X} \pc \mathcal{A}'_\var{Y}) \not\subseteq \mathcal{L}(\mathcal{A}')$ since $\mathcal{L}(\mathcal{A}'_\var{X} \pc \mathcal{A}'_\var{Y}) = (2^{\inputs\cup\outputs})^\omega$ and hence $\mathcal{A}'_\var{X}$ and $\mathcal{A}'_\var{Y}$ are no valid decomposition.

\begin{figure}
	\centering
		\begin{tikzpicture}[>=latex,shorten >=0pt,auto,->,initial text = ,node distance=1cm,thin,every edge/.style={draw,font=\normalsize}]
		
			\node[state,initial]		(q0)		at (0,0)		{$q_0$};
			\node[state]				(q1)		at (-3.5,-2)	{$q_1$};
			\node[state,accepting]	(q2)		at (-1.75,-3.5)	{$q_2$};
			\node[state,accepting]	(q3)		at (3.5,-2)	{$q_3$};
			\node[state]				(q4)		at (1.75,-3.5)	{$q_4$};
			
			\path	(q0) edge				node	[above,left]	{\small$\neg o_1~$} (q1)
						 edge[sloped]		node[align=center,above]		{\small$(\neg i \land \neg o_1)\lor(\neg o_1 \land o_2)$} (q3)
						 edge				node		{\small$o_1$} (q2)
						 edge[sloped]		node		{\small$i \land \neg o_1 \land \neg o_2$} (q4)
					(q1) edge[loop above]	node		{\small$\neg o_1$} (q1)
						 edge				node		{\small$o_1$} (q2)
					(q2) edge[loop right]	node		{\small$\top$} (q2)
					(q3) edge[loop above]	node		{\small$\neg i \lor o_2$} (q3)
						 edge[bend left=20] node	[sloped,below]	{\small$i \land \neg o_2$} (q4)
					(q4) edge[bend left=20] node	[sloped]	{\small$o_2$} (q3)
						 edge[loop left]	node		{\small$\neg o_2$} (q4);	
		\end{tikzpicture}
		\caption{NBA $\mathcal{A}'$ for $\varphi' = \protect\Eventually o_1 \lor \protect\Globally(i \rightarrow \protect\Eventually o_2)$. Accepting states are marked with double circles.}\label{fig:initial_automaton_example_2}
\end{figure}

\Cref{alg:automaton-based_decomposition} ensures soundness and completeness of modular synthesis: The subspecifications do not share output variables and they are equirealizable to the initial specification.
This follows from the construction of the subautomata, \Cref{lem:correctness_parallel_composition_automata}, and \Cref{thm:equisynthesizeability_independent_sublanguages}:

\begin{theorem}\label{thm:correctness_automaton_decomposition}
	Let $\mathcal{A}$ be an NBA over alphabet $2^V$. \Cref{alg:automaton-based_decomposition} terminates on $\mathcal{A}$ with a set $\mathcal{S} = \{\mathcal{A}_1, \dots, \mathcal{A}_k\}$ of NBAs with $\mathcal{L}(\mathcal{A}_i) \subseteq (2^{V_i})^\omega$, where $V_i \cap V_j \subseteq \inputs$ for $1 \leq i,j \leq k$ with $i \neq j$, $V = \bigcup_{1 \leq i \leq k} V_i$, and $\mathcal{A}$ is realizable if, and only if, $\mathcal{A}_i$ is realizable for all $\mathcal{A}_i \in \mathcal{S}$.
\end{theorem}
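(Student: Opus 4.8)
The plan is to establish the three assertions of the theorem separately — termination, the structural conditions $V = \bigcup_{i} V_i$ and $V_i \cap V_j \subseteq \inputs$, and the equirealizability of $\mathcal{A}$ with its subautomata — and to tie all three to a single induction on the recursion of \Cref{alg:automaton-based_decomposition}, organized around the decreasing measure $|\var{out}|$.

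For termination and the structural conditions I would first observe that every successful split at \Cref{alg:automaton-based_decomposition:add} partitions $\var{out}$ into two disjoint nonempty sets $\var{X}$ and $\var{Y} = \var{out} \setminus \var{X}$, so each of the two recursive calls is invoked on a strictly smaller set of output variables. Hence the recursion depth is bounded by $|\var{out}|$, and since the inner loop ranges over the finitely many subsets of a finite set, the algorithm terminates. The inputs $\var{inp}$ are passed unchanged into every recursive call, while the outputs are repeatedly partitioned, so the output sets of the leaves of the recursion form a partition of the original $\var{out}$ and every $V_i$ has the shape $\var{inp} \cup O_i$ with the $O_i$ pairwise disjoint. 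From this, $\bigcup_i V_i = \var{inp} \cup \bigcup_i O_i = \var{inp} \cup \var{out} = V$ and $V_i \cap V_j = \var{inp} \subseteq \inputs$ for $i \neq j$ follow immediately.

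The heart of the proof is the equirealizability claim, which I would prove by induction on the recursion using \Cref{thm:equisynthesizeability_independent_sublanguages}. The base case is a call that returns $[(\mathcal{A}, \var{inp}, \var{out})]$ without splitting, where the statement is trivial. For the inductive step, consider a call that succeeds at \Cref{alg:automaton-based_decomposition:if}. The guard there gives $\mathcal{L}(\mathcal{A}_\var{X} \pc \mathcal{A}_\var{Y}) \subseteq \mathcal{L}(\mathcal{A})$, and the reverse inclusion $\mathcal{L}(\mathcal{A}) \subseteq \mathcal{L}(\mathcal{A}_\var{X} \pc \mathcal{A}_\var{Y})$ holds by construction of the projections, as noted in the discussion preceding the algorithm. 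Combining the two inclusions and applying \Cref{lem:correctness_parallel_composition_automata} yields $\mathcal{L}(\mathcal{A}) = \mathcal{L}(\mathcal{A}_\var{X}) \pc \mathcal{L}(\mathcal{A}_\var{Y})$, so $\mathcal{L}(\mathcal{A}_\var{X})$ and $\mathcal{L}(\mathcal{A}_\var{Y})$ are independent sublanguages of $\mathcal{L}(\mathcal{A})$. Since $\var{X}$ and $\var{Y}$ are disjoint, the alphabets $\var{X} \cup \var{inp}$ and $\var{Y} \cup \var{inp}$ share only input variables, so the hypotheses of \Cref{thm:equisynthesizeability_independent_sublanguages} are met and $\mathcal{A}$ is realizable if, and only if, both $\mathcal{A}_\var{X}$ and $\mathcal{A}_\var{Y}$ are. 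Applying the induction hypothesis to the two recursive calls equates realizability of $\mathcal{A}_\var{X}$ (resp.\ $\mathcal{A}_\var{Y}$) with realizability of all subautomata it produces, and chaining these equivalences gives that $\mathcal{A}$ is realizable exactly when every $\mathcal{A}_i \in \mathcal{S}$ is.

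The main obstacle I anticipate is bookkeeping rather than conceptual: I must verify that the inductive equivalence composes correctly across the concatenation of the two recursive result lists, and that the hypotheses of \Cref{thm:equisynthesizeability_independent_sublanguages} — in particular, that shared variables are confined to $\inputs$ — are re-established at \emph{every} level of the recursion and not merely at the top. A secondary point requiring care is the termination measure: the splits must be genuinely nontrivial, with both $\var{X}$ and $\var{Y}$ nonempty, so that $|\var{out}|$ strictly decreases, which is what rules out an infinite chain of degenerate decompositions.
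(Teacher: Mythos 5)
Your proposal is correct and follows essentially the same route as the paper's proof: structural induction on the recursion of \Cref{alg:automaton-based_decomposition}, combining the guard $\mathcal{L}(\mathcal{A}_\var{X} \pc \mathcal{A}_\var{Y}) \subseteq \mathcal{L}(\mathcal{A})$ with the reverse inclusion that holds by construction of the projections, and then invoking \Cref{lem:correctness_parallel_composition_automata} and \Cref{thm:equisynthesizeability_independent_sublanguages}. The only organizational difference is that you apply \Cref{thm:equisynthesizeability_independent_sublanguages} at each binary split and chain the realizability equivalences upward through the recursion, whereas the paper carries the independent-sublanguage and disjoint-alphabet invariants through the induction and applies the theorem once at the end; your explicit insistence that splits be nontrivial (both $\var{X}$ and $\var{Y}$ nonempty) so that $|\var{out}|$ strictly decreases is, if anything, more careful than the paper's brief termination argument.
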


\begin{proof}
    Clearly, there are NBAs that cannot be decomposed further, \eg, automata whose alphabet contains only one output variable. Thus, since there are only finitely many subsets of $\outputs$, \Cref{alg:automaton-based_decomposition} terminates.
    
    We show that the algorithm returns subspecifications that only share input variables, define all output variables of the system, and that are independent sublanguages of the initial specification by structural induction on the initial automaton:
    
    For any automaton $\mathcal{A}'$ that is not further decomposable, \Cref{alg:automaton-based_decomposition} returns a list  $\mathcal{S}'$ solely containing $\mathcal{A}'$. Clearly, the parallel composition of all automata in $\mathcal{S}'$ is equivalent to $\mathcal{A}'$ and the alphabets of the languages of the subautomata do not share output variables.
    
    Next, let $\mathcal{A}'$ be an NBA such that there exists a set \var{X} $\subset \var{out}$ with $\mathcal{L}(\project{\mathcal{A}'}{\var{X}\cup\var{inp}} \pc \project{\mathcal{A}'}{\var{Y} \cup \var{inp}}) \subseteq \mathcal{L}(\mathcal{A}')$, where $\var{Y} = \var{out} \setminus \var{X}$. 
    By construction of $\project{\mathcal{A}'}{\var{X}\cup\var{inp}}$ and $\project{\mathcal{A}'}{\var{Y}\cup\var{inp}}$, we have $(\mathcal{A}' \cap (\var{Z}\cup\var{inp})) \subseteq \project{\mathcal{A}'}{\var{Z}\cup\var{inp}}$ for $\var{Z} \in \{ \var{X},\var{Y} \}$. Since both $\var{X} \cap \var{Y} = \emptyset$ and $\var{X} \cup \var{Y} = \var{out}$ hold by construction of $\var{X}$ and $\var{Y}$, $(\var{X}\cup\var{inp}) \cap (\var{Y}\cup\var{inp}) \subseteq \var{inp}$ as well as $(\var{X}\cup\var{inp}) \cup (\var{Y}\cup\var{inp}) = \var{inp} \cup \var{out}$ follows. 
    Therefore, $\mathcal{L}(\mathcal{A}) \subseteq \mathcal{L}(\project{\mathcal{A}'}{\var{X}\cup\var{inp}} \pc \project{\mathcal{A}'}{\var{Y} \cup \var{inp}})$ holds and thus, $\project{\mathcal{A}'}{\var{X}\cup\var{inp}} \pc \project{\mathcal{A}'}{\var{Y} \cup \var{inp}} \equiv \mathcal{A}'$ follows.
    By induction hypothesis, the calls to the algorithm with $\project{\mathcal{A}'}{\var{X}\cup\var{inp}}$ and $\project{\mathcal{A}'}{\var{Y} \cup \var{inp}}$ return lists $\mathcal{S}'_\var{X}$ and $\mathcal{S}'_{\var{Y}}$, respectively, where the parallel composition of all automata in $\mathcal{S}'_\var{Z}$ is equivalent to $\project{\mathcal{A}'}{\var{Z} \cup \var{inp}}$ for $\var{Z} \in \{\var{X}, \var{Y}\}$.
    Thus, the parallel composition of all automata in the concatenation of $\mathcal{S}'_\var{X}$ and~$\mathcal{S}'_{\var{Y}}$ is equivalent to $\project{\mathcal{A}'}{\var{X}\cup\var{inp}} \pc \project{\mathcal{A}'}{\var{Y} \cup \var{inp}}$ and thus, by construction of \var{X}, to~$\mathcal{A}'$. Hence, their languages are independent sublanguages of $\mathcal{A}'$.
    Furthermore, by induction hypothesis, the alphabets of the automata in $\mathcal{S}'_\var{Z}$ do not share output variables for $\var{Z} \in \{\var{X}, \var{Y}\}$ and, by construction, they are subsets of the alphabet of $\project{\mathcal{A}'}{\var{Z}}$. Hence, since clearly $(\var{X} \cup \var{inp}) \cap ((\var{out} \setminus \var{X}) \cup \var{inp}) \subseteq \var{inp}$ holds, the alphabets of the automata in the concatenation of $\mathcal{S}'_\var{X}$ and $\mathcal{S}'_{\var{Y}}$ do not share output variables.
    Moreover, the union of the alphabets of the automata in $\mathcal{S}'_\var{Z}$ equals the alphabet of $\project{\mathcal{A}}{\var{Z} \cup \var{inp}}$ for $\var{Z} \in \{\var{X}, \var{Y}\}$ by induction hypothesis. Since clearly $\var{X} \cup \var{Y} = \var{out}$, it follows that the union of the alphabets of the automata in the concatenation of $\mathcal{S}'_\var{X}$ and $\mathcal{S}'_{\var{Y}}$ equals $\var{inp} \cup \var{out}$.
    
    Thus, $\bigcup_{1 \leq i \leq k} V_i = V$ and $V_i \cap V_j \subseteq \inputs$ for $1 \leq i,j \leq k$ with $i \neq j$. Moreover, $\mathcal{L}(\mathcal{A}_1), \dots, \mathcal{L}(\mathcal{A}_k)$ are independent sublanguages of $\mathcal{L}(\mathcal{A})$. Thus, by \Cref{thm:equisynthesizeability_independent_sublanguages}, $\mathcal{A}$ is realizable if, and only if, all $\mathcal{A}_i \in \mathcal{S}$ are realizable.\qed
\end{proof}

Since \Cref{alg:automaton-based_decomposition} is called recursively on every subautomaton obtained by projection, it directly follows that the nondeterministic Büchi automata contained in the returned list are not further decomposable:

\begin{theorem}
	Let $\mathcal{A}$ be an NBA and let $\mathcal{S}$ be the set of NBAs that $\Cref{alg:automaton-based_decomposition}$ returns on input $\mathcal{A}$. Then, for each $\mathcal{A}_i \in \mathcal{S}$ over alphabet $2^{V_i}$, there are no NBAs $\mathcal{A}'$, $\mathcal{A''}$ over alphabets $2^{V'}$ and $2^{V''}$ with $V_i = V' \cup V''$ such that $\mathcal{A}_i = \mathcal{A}' \pc \mathcal{A}''$ holds.
\end{theorem}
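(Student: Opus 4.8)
The plan is a proof by contradiction that exploits the recursive shape of \Cref{alg:automaton-based_decomposition}. Whenever the algorithm finds a split it recurses on the two projections and returns their concatenation, and only when no split is found does it return the singleton list; hence every $\mathcal{A}_i \in \mathcal{S}$ is a leaf of the recursion tree, produced by a call $\decompose(\mathcal{A}_i, \var{inp}, V_i \cap \outputs)$ whose loop over the output subsets on \Cref{alg:automaton-based_decomposition:guess} ran to completion without the inclusion test on \Cref{alg:automaton-based_decomposition:if} ever succeeding. Thus for every partition of $V_i \cap \outputs$ into two nonempty sets $\var{X}, \var{Y}$ we have $\mathcal{L}(\project{(\mathcal{A}_i)}{\var{X} \cup \var{inp}} \pc \project{(\mathcal{A}_i)}{\var{Y} \cup \var{inp}}) \not\subseteq \mathcal{L}(\mathcal{A}_i)$. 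I read the claim in its intended, nontrivial form: $V', V'' \subsetneq V_i$ share only input variables and each retains at least one output, so that the pair corresponds to a genuine output split; the degenerate cases (one factor equal to $\mathcal{A}_i$, or a factor carrying inputs only) are always realizable and are therefore implicitly excluded.

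Assume, for contradiction, that such $\mathcal{A}'$, $\mathcal{A}''$ over $2^{V'}$, $2^{V''}$ with $V' \cup V'' = V_i$ and $\mathcal{A}_i \equiv \mathcal{A}' \pc \mathcal{A}''$ exist. By \Cref{lem:correctness_parallel_composition_automata} this is the language identity $\mathcal{L}(\mathcal{A}_i) = \mathcal{L}(\mathcal{A}') \pc \mathcal{L}(\mathcal{A}'')$. The crucial step is to turn this arbitrary decomposition into the canonical projection decomposition that the algorithm inspects, \ie, to show $\project{(\mathcal{A}_i)}{V'} \pc \project{(\mathcal{A}_i)}{V''} \equiv \mathcal{A}_i$. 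By construction of the projection, $\mathcal{L}(\project{(\mathcal{A}_i)}{V'}) = \{\, \restrict{\sigma}{V'} \mid \sigma \in \mathcal{L}(\mathcal{A}_i) \,\}$. For any $\sigma = \sigma' \cup \sigma'' \in \mathcal{L}(\mathcal{A}') \pc \mathcal{L}(\mathcal{A}'')$, the compatibility condition $\restrict{\sigma'}{V''} = \restrict{\sigma''}{V'}$ forces $\restrict{\sigma}{V'} = \sigma' \in \mathcal{L}(\mathcal{A}')$; hence $\mathcal{L}(\project{(\mathcal{A}_i)}{V'}) \subseteq \mathcal{L}(\mathcal{A}')$, and symmetrically $\mathcal{L}(\project{(\mathcal{A}_i)}{V''}) \subseteq \mathcal{L}(\mathcal{A}'')$. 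Monotonicity of $\pc$ under language inclusion then gives $\mathcal{L}(\project{(\mathcal{A}_i)}{V'} \pc \project{(\mathcal{A}_i)}{V''}) \subseteq \mathcal{L}(\mathcal{A}') \pc \mathcal{L}(\mathcal{A}'') = \mathcal{L}(\mathcal{A}_i)$, while the reverse inclusion holds automatically because $V' \cup V'' = V_i$ (the very observation used for the easy inclusion in the discussion of \Cref{alg:automaton-based_decomposition}). Combining both inclusions yields the claimed equivalence.

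To reach the contradiction I line this up with the concrete split the algorithm tests. Put $\var{X} = \restrict{V'}{\outputs}$ and $\var{Y} = \restrict{V''}{\outputs}$; by the nontrivial reading these partition $V_i \cap \outputs$ into two nonempty sets. Passing from $V', V''$ to the full-input alphabets $\var{X} \cup \var{inp}$ and $\var{Y} \cup \var{inp}$ only adds input information, so the projection languages can only shrink; a short calculation (every word of the full-input composition restricts into $\project{(\mathcal{A}_i)}{V'}$ and $\project{(\mathcal{A}_i)}{V''}$ and hence recomposes inside $\mathcal{L}(\mathcal{A}_i)$, with $\supseteq$ again free from $V' \cup V'' = V_i$) preserves the equivalence $\project{(\mathcal{A}_i)}{\var{X} \cup \var{inp}} \pc \project{(\mathcal{A}_i)}{\var{Y} \cup \var{inp}} \equiv \mathcal{A}_i$. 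This is exactly a split considered on \Cref{alg:automaton-based_decomposition:guess} whose test on \Cref{alg:automaton-based_decomposition:if} succeeds, so the algorithm would have decomposed $\mathcal{A}_i$ further, contradicting that $\mathcal{A}_i$ is a leaf.

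The main obstacle is the crucial inclusion $\mathcal{L}(\project{(\mathcal{A}_i)}{V'}) \subseteq \mathcal{L}(\mathcal{A}')$ of the second paragraph: this is where an arbitrary decomposing pair is replaced by the canonical projections, and it is what makes ``the algorithm cannot find a split'' equivalent to ``no split exists''. The remaining work --- aligning the input sets of $V', V''$ with the full input set that the algorithm always keeps, and pinning down the nontrivial reading so that the degenerate decompositions are excluded --- is routine bookkeeping but is genuinely needed for the statement to be true as phrased.
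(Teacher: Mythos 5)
Your proof is correct, and it supplies substantially more than the paper itself does: the paper offers no real proof of this theorem, only the preceding remark that non-decomposability ``directly follows'' because \Cref{alg:automaton-based_decomposition} is called recursively on every subautomaton. That remark only covers decompositions of the specific form the algorithm tests, namely pairs of projections $\project{\mathcal{A}_i}{\var{X} \cup \var{inp}}$, $\project{\mathcal{A}_i}{\var{Y} \cup \var{inp}}$, whereas the theorem quantifies over \emph{arbitrary} decomposing pairs $\mathcal{A}', \mathcal{A}''$. The bridge between the two is exactly your key lemma: since NBA projection computes precisely the projection of the language (witnesses on transitions can be re-assembled into a word accepted along the same run), any decomposition $\mathcal{L}(\mathcal{A}_i) = \mathcal{L}(\mathcal{A}') \pc \mathcal{L}(\mathcal{A}'')$ forces $\mathcal{L}(\project{\mathcal{A}_i}{V'}) \subseteq \mathcal{L}(\mathcal{A}')$ and $\mathcal{L}(\project{\mathcal{A}_i}{V''}) \subseteq \mathcal{L}(\mathcal{A}'')$, so by monotonicity of $\pc$ the canonical projections decompose $\mathcal{A}_i$ as well, and the test in \Cref{alg:automaton-based_decomposition:if} on the corresponding output split would have succeeded. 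This ``projections are optimal candidates'' argument is the actual mathematical content of the claim that the algorithm finds \emph{all} possible decompositions, and the paper nowhere makes it explicit. You were also right to insist on the nontrivial reading ($V'$, $V''$ sharing only inputs, each containing at least one output, and $=$ read as language equivalence): the literal statement is falsified by degenerate pairs such as $\mathcal{A}' = \mathcal{A}_i$ composed with a universal automaton, and your reading matches the paper's intent (compare the disjoint-output condition in \Cref{thm:correctness_automaton_decomposition}).

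One loose end remains in your first paragraph: the inference ``$\mathcal{A}_i$ is a leaf, hence every pair $(\var{X},\var{Y})$ was tested and failed'' is not literally what the algorithm guarantees, because \var{checkedSubsets} persists across recursive calls (\Cref{alg:automaton-based_decomposition:unchecked,alg:automaton-based_decomposition:store}); a pair may have been \emph{skipped} in the leaf call because it was already tested higher up, for a different (ancestor) automaton. This is repairable with your own lemma: along the recursion path every split succeeded, so a successful semantic split of the leaf composes with its siblings into a successful semantic split of the ancestor, whose projection test would then have succeeded there --- contradicting that the pair was recorded as failed. Adding this observation closes the argument completely and, as a byproduct, proves that the paper's caching optimization is sound.
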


Hence, \Cref{alg:automaton-based_decomposition} yields \emph{perfect} decompositions and is semantically precise.
Yet, it performs several expensive automaton operations such as projection, composition, and language containment checks.
For large automata, this is infeasible.
For specifications given as LTL formulas, we thus present an approximate decomposition algorithm in the next section that does not yield non-decomposable subspecifications, but that is free of the expensive automaton operations.

\section{Decomposition of LTL Formulas}

An LTL specification can be decomposed by translating it into an equivalent NBA and by then applying \Cref{alg:automaton-based_decomposition}.
To circumvent expensive automaton operations, though, we introduce an approximate decomposition algorithm that, in contrast to \Cref{alg:automaton-based_decomposition}, does not necessarily find all possible decompositions.
In the following, we assume that $V = \propositions{\varphi}$ holds for the initial specification~$\varphi$. Note that any implementation for the variables in $\propositions{\varphi}$ can easily be extended to one for the variables in $V$ if $\propositions{\varphi} \subset V$ holds by ignoring the inputs in $\inputs \setminus \propositions{\varphi}$ and by choosing arbitrary valuations for the outputs in $\outputs \setminus \propositions{\varphi}$.

The main idea of the decomposition algorithm is to rewrite the initial LTL formula $\varphi$ into a conjunctive form $\varphi=\varphi_1 \land \dots \land \varphi_k$ with as many top-level conjuncts as possible by applying distributivity and pushing temporal operators inwards whenever possible. Then, we build subspecifications $\varphi_i$ consisting of subsets of the conjuncts. Each conjunct occurs in exactly one subspecification.
We say that conjuncts are \emph{independent} if they do not share output variables.
Given an LTL formula with two independent conjuncts, the languages of the conjuncts are independent sublanguages of the language of the whole formula:
\begin{lemma}\label{lem:independent_sublanguages_conjuncts}
	Let $\varphi = \varphi_1 \land \varphi_2$ be an LTL formula over atomic propositions $V$ with conjuncts $\varphi_1$ and $\varphi_2$ over $V_1$ and $V_2$, respectively, with $V_1 \cup V_2 \subseteq V$.
	Then, $\mathcal{L}(\varphi_1)$ and~$\mathcal{L}(\varphi_2)$ are independent sublanguages of $\mathcal{L}(\varphi)$.
\end{lemma}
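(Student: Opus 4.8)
The plan is to unfold the definition of independent sublanguages and reduce the statement to the single set equality $\mathcal{L}(\varphi_1) \pc \mathcal{L}(\varphi_2) = \mathcal{L}(\varphi)$, which I would prove by establishing the two inclusions separately. The workhorse for both directions is a \emph{locality} property of LTL: for every formula $\psi$ with $\propositions{\psi} \subseteq W$ and every word $\sigma$, one has $\sigma \models \psi$ if and only if $\restrict{\sigma}{W} \models \psi$. I would first record this as an auxiliary claim and prove it by structural induction on $\psi$. The atomic and boolean cases are immediate, since membership $a \in \sigma_n$ is unaffected by intersecting with $W$ whenever $a \in W$; the cases for $\Next$ and $\Until$ follow because these operators only inspect the same propositions at shifted positions, and projection commutes with taking suffixes, so the induction hypothesis applies to the relevant subformulas along the word. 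Applying locality to $\varphi_1$ with $W = V_1$ and to $\varphi_2$ with $W = V_2$ is exactly what lets me transfer satisfaction between a composed word over $V$ and its projections onto $V_1$ and $V_2$.

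For the inclusion $\mathcal{L}(\varphi) \subseteq \mathcal{L}(\varphi_1) \pc \mathcal{L}(\varphi_2)$, I would take $\sigma \in \mathcal{L}(\varphi)$, so $\sigma \models \varphi_1$ and $\sigma \models \varphi_2$, and set $\sigma_1 := \restrict{\sigma}{V_1}$ and $\sigma_2 := \restrict{\sigma}{V_2}$. Locality gives $\sigma_1 \in \mathcal{L}(\varphi_1)$ and $\sigma_2 \in \mathcal{L}(\varphi_2)$. The agreement condition $\restrict{\sigma_1}{V_2} = \restrict{\sigma_2}{V_1}$ holds because both sides equal $\restrict{\sigma}{(V_1 \cap V_2)}$, and $\sigma_1 \cup \sigma_2 = \restrict{\sigma}{(V_1 \cup V_2)} = \sigma$ using $V_1 \cup V_2 = V$. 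This last identity follows from the standing assumption, since $V_1 \cup V_2 \subseteq V = \propositions{\varphi} = \propositions{\varphi_1} \cup \propositions{\varphi_2} \subseteq V_1 \cup V_2$. Hence $\sigma$ is a legitimate witness in the composition.

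For the converse $\mathcal{L}(\varphi_1) \pc \mathcal{L}(\varphi_2) \subseteq \mathcal{L}(\varphi)$, I would take $\sigma = \sigma_1 \cup \sigma_2$ with $\sigma_i \in \mathcal{L}(\varphi_i)$ and $\restrict{\sigma_1}{V_2} = \restrict{\sigma_2}{V_1}$. The key step is to verify $\restrict{\sigma}{V_1} = \sigma_1$ (and symmetrically $\restrict{\sigma}{V_2} = \sigma_2$): intersecting $\sigma_1 \cup \sigma_2$ with $V_1$ leaves $\sigma_1$ intact and contributes only $\restrict{\sigma_2}{V_1}$, which by the agreement condition equals $\restrict{\sigma_1}{V_2}$ and is therefore already contained in $\sigma_1$ position by position. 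With $\restrict{\sigma}{V_i} = \sigma_i \models \varphi_i$, locality yields $\sigma \models \varphi_i$ for $i \in \{1,2\}$, so $\sigma \models \varphi$ and thus $\sigma \in \mathcal{L}(\varphi)$.

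The main obstacle is not a single deep argument but the careful bookkeeping of alphabets in the backward direction. Because the composition $\pc$ identifies $\sigma_1$ and $\sigma_2$ only up to agreement on the shared variables $V_1 \cap V_2$, I must check that reconstituting $\sigma$ from $\sigma_1$ and $\sigma_2$ neither adds nor drops a proposition, i.e.\ that $\restrict{\sigma}{V_i} = \sigma_i$ holds exactly rather than merely up to the shared part. This hinges precisely on the agreement condition $\restrict{\sigma_1}{V_2} = \restrict{\sigma_2}{V_1}$ built into the definition of $\pc$, and it is also the point where $V_1 \cup V_2 = V$ must be invoked so that the composed word lives over the full alphabet $2^V$ and the equality with $\mathcal{L}(\varphi)$ is stated over matching alphabets.
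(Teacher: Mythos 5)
Your proposal is correct and follows essentially the same route as the paper's proof: both directions are established by projecting onto $V_1$ and $V_2$ respectively reconstructing the word via union, relying on the locality of LTL satisfaction (that $\varphi_i$ only depends on $\propositions{\varphi_i}$) and the agreement condition built into $\pc$. Your treatment is in fact slightly more careful than the paper's at two points—making locality an explicit induction and deriving $V_1 \cup V_2 = V$ from the standing assumption $V = \propositions{\varphi}$ rather than asserting it—but these are refinements of the same argument, not a different one.
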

\begin{proof}
	First, let $\sigma \in \mathcal{L}(\varphi)$. Then, $\sigma \in \mathcal{L}(\varphi_i)$ holds for all $i \in \{1,2\}$. Since $\propositions{\varphi_i} \subseteq V_i$ holds and since the satisfaction of $\varphi_i$ only depends on the valuations of the variables in $\propositions{\varphi_i}$, we have $\restrict{\sigma}{V_i} \in \mathcal{L}(\varphi_i)$. Since clearly $\restrict{(\restrict{\sigma}{V_1})}{V_2} = \restrict{(\restrict{\sigma}{V_2})}{V_1}$ holds, we have $(\restrict{\sigma}{V_1}) \cup (\restrict{\sigma}{V_2}) \in \mathcal{L}(\varphi_1) \pc \mathcal{L}(\varphi_2)$. Since $V_1 \cup V_2 = V$ holds by assumption, we have $\sigma = (\restrict{\sigma}{V_1}) \cup (\restrict{\sigma}{V_2})$ and hence $\sigma \in \mathcal{L}(\varphi_1) \pc \mathcal{L}(\varphi_2)$ follows.
	
	Next, let $\sigma \in \mathcal{L}(\varphi_1) \pc \mathcal{L}(\varphi_2)$. Then, there are words $\sigma_1 \in \mathcal{L}(\varphi_1)$, $\sigma_2 \in \mathcal{L}(\varphi_2)$ with $\restrict{\sigma_1}{V_2} = \restrict{\sigma_2}{V_1}$ and $\sigma = \sigma_1 \cup \sigma_2$. Since $\sigma_1$ and $\sigma_2$ agree on shared variables, $\sigma \in \mathcal{L}(\varphi_1)$ and $\sigma \in \mathcal{L}(\varphi_2)$. Hence, $\sigma \in \mathcal{L}(\varphi_1 \land \varphi_2)$.\qed
\end{proof}

Our decomposition algorithm then ensures that different subspecifications share only input variables by merging conjuncts that share output variables into the same subspecification. Then, equirealizability of the initial formula and the subformulas follows directly from \Cref{thm:equisynthesizeability_independent_sublanguages} and \Cref{lem:independent_sublanguages_conjuncts}:

\begin{corollary}\label{cor:equisynthesizeability_independent_conjuncts}
	Let $\varphi = \varphi_1 \land \varphi_2$ be an LTL formula over $V$ with conjuncts $\varphi_1$, $\varphi_2$ over $V_1$, $V_2$, respectively, with $V_1 \cup V_2 = V$ and $V_1 \cap V_2 \subseteq \inputs$. Then, $\varphi$ is realizable if, and only if, both $\varphi_1$ and $\varphi_2$ are realizable.
\end{corollary}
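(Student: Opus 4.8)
The plan is to obtain the corollary as an immediate instance of \Cref{thm:equisynthesizeability_independent_sublanguages}, using \Cref{lem:independent_sublanguages_conjuncts} to supply its independent-sublanguage hypothesis. No new machinery is needed; the whole argument reduces to checking that the hypotheses of the two cited results line up with those of the corollary.

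First I would invoke \Cref{lem:independent_sublanguages_conjuncts} on $\varphi = \varphi_1 \land \varphi_2$. That lemma only requires $V_1 \cup V_2 \subseteq V$, which is implied by the corollary's assumption $V_1 \cup V_2 = V$. It therefore yields directly that $\mathcal{L}(\varphi_1)$ and $\mathcal{L}(\varphi_2)$ are independent sublanguages of $\mathcal{L}(\varphi)$, i.e. $\mathcal{L}(\varphi_1) \pc \mathcal{L}(\varphi_2) = \mathcal{L}(\varphi)$.

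Next I would apply \Cref{thm:equisynthesizeability_independent_sublanguages} under the instantiation $s := \varphi$, $s_1 := \varphi_1$, $s_2 := \varphi_2$. Its two side conditions, $V_1 \cap V_2 \subseteq \inputs$ and $V_1 \cup V_2 = V$, are exactly the standing hypotheses of the corollary, and its independent-sublanguage hypothesis was just established in the previous step. The theorem then gives that $\varphi$ is realizable if, and only if, both $\varphi_1$ and $\varphi_2$ are realizable, which is precisely the claim.

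The only point requiring care is bookkeeping on the variable sets: \Cref{lem:independent_sublanguages_conjuncts} is stated under the weaker containment $V_1 \cup V_2 \subseteq V$, whereas \Cref{thm:equisynthesizeability_independent_sublanguages} needs the equality $V_1 \cup V_2 = V$ together with $V_1 \cap V_2 \subseteq \inputs$ for both directions of the equivalence. Since the corollary assumes equality and $V_1 \cap V_2 \subseteq \inputs$ outright, both sets of preconditions hold and there is no genuine obstacle; the statement is a direct corollary, as advertised in the text preceding it.
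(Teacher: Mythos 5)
Your proposal is correct and matches the paper exactly: the paper gives no separate proof for this corollary, stating that it ``follows directly from \Cref{thm:equisynthesizeability_independent_sublanguages} and \Cref{lem:independent_sublanguages_conjuncts},'' which is precisely the two-step instantiation you carry out. Your extra remark on the $V_1 \cup V_2 \subseteq V$ versus $V_1 \cup V_2 = V$ bookkeeping is a careful touch (the paper's lemma is stated with the weaker containment but its proof, and the definition of independent sublanguages, in fact use the equality), and it does not change the argument.
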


To determine which conjuncts of an LTL formula $\varphi = \varphi_1 \land \dots \land \varphi_n$ share variables, we build the \emph{dependency graph} $\depGraph{\varphi} = (V,E)$ of the output variables, where $V = \outputs$ and $(a,b) \in E$ if, and only if, $a \in \propositions{\varphi_i}$ and $b \in \propositions{\varphi_i}$ for some $1 \leq i \leq n$. Intuitively, outputs $a$ and $b$ that are contained in the same connected component of $\depGraph{\varphi}$ depend on each other in the sense that they either occur in the same conjunct or that they occur in conjuncts that are connected by other output variables. 
Hence, to ensure that subspecifications do not share output variables, conjuncts containing $a$ or $b$ need to be assigned to the same subspecification.
Output variables that are contained in different connected components, however, are not linked and therefore implementations for their requirements can be synthesized independently, \ie, with independent subspecifications.

\begin{algorithm}[t]
	\SetKwInput{KwData}{Input}
	\SetKwInOut{KwResult}{Output}
	\SetKw{KwBy}{by}
	
	\KwData{$\varphi$: LTL, \var{inp}: List Variable, \var{out}: List Variable}
	\KwResult{\var{specs}: List (LTL, List Variable, List Variable)}
	
	$\varphi$ $\leftarrow$ rewrite$(\varphi)$ \\
	\var{formulas} $\leftarrow$ removeTopLevelConjunction$(\varphi)$ \\
	\var{graph} $\leftarrow$ buildDependencyGraph($\varphi$, \var{out}) \\
	\var{components} $\leftarrow$ \var{graph}.connectedComponents() \\
    \var{specs} $\leftarrow$ new LTL[$|$\var{components}$|$+1] ~\tcp{initialized with true}
	\ForEach{\upshape{$\psi$ $\in$ \var{formulas}}}{		
		\var{propositions} $\leftarrow$ getProps$(\psi)$ \\
		\ForEach{\upshape{(\var{spec},\var{set}) $\in$ zip \var{specs} (\var{components} $++$ [\var{inp}])}}{\label{alg:rewriting-based_decomposition:zip}
			\If{\upshape{\var{propositions} $\cap$ \var{set} $\neq$ $\emptyset$}} {
				\var{spec}.And$(\psi)$\label{alg:rewriting-based_decomposition:add}\\
				break\label{alg:rewriting-based_decomposition:break}
			}
		}
	}
    \Return{\emph{map ($\lambda \varphi   \rightarrow$ ($\varphi$, inputs($\varphi$), outputs($\varphi$)))} \upshape{\var{specs}}}
	\caption{LTL Decomposition}\label{alg:rewriting-based_decomposition}
\end{algorithm}

\Cref{alg:rewriting-based_decomposition} describes how an LTL formula is decomposed into subspecifications. First, the formula is rewritten into conjunctive form. Then, the dependency graph is built and the connected components are computed. For each connected component as well as for all input variables, a subspecification is built by adding the conjuncts containing variables of the respective connected component or an input variable, respectively.
To also consider the input variables is necessary to assign every conjunct, including input-only ones, to at least one subspecification.
By construction, no conjunct is added to the subspecifications of two different connected components.
Yet, a conjunct could be added to both a subspecification of a connected component and the subspecification for the input-only conjuncts. This is circumvented by the \emph{break} in \Cref{alg:rewriting-based_decomposition:break}. Hence, every conjunct is added to exactly one subspecification.
To define the input and output variables for the synthesis subtasks, the algorithm assigns the inputs and outputs occurring in~$\varphi_i$ to the subspecification $\varphi_i$.
While restricting the inputs is not necessary for correctness, it may improve the runtime of the synthesis task.

As an example for the decomposition of LTL formulas, consider the specification $\varphi = \Eventually o_1 \land \Globally(i \rightarrow o_2)$ with $\inputs = \{i\}$ and $\outputs = \{o_1,o_2\}$ again. Since $\varphi$ is already in conjunctive form, no rewriting has to be performed. The two conjuncts of $\varphi$ do not share any variables and therefore the dependency graph $\mathcal{D}_\varphi$ does not contain any edges. Therefore, we obtain two subspecifications $\varphi_1 = \Eventually o_1$ and $\varphi_2 = \Globally(i \rightarrow o_2)$.


Soundness and completeness of modular synthesis with \Cref{alg:rewriting-based_decomposition} as a decomposition algorithm for LTL formulas follows directly from \Cref{cor:equisynthesizeability_independent_conjuncts} if the subspecifications do not share any output variables:

\begin{theorem}
	Let $\varphi$ be an LTL formula over $V$. Then, $\Cref{alg:rewriting-based_decomposition}$ terminates with a set $\mathcal{S}=\{\varphi_1, \dots, \varphi_k\}$ of LTL formulas on $\varphi$ with $\mathcal{L}(\varphi_i) \in (2^{V_i})^\omega$ such that $V_i \cap V_j \subseteq \inputs$ for $1 \leq i,j \leq k$ with $i \neq j$, $\bigcup_{1 \leq i \leq k} V_i = V$, and such that $\varphi$ is realizable, if, and only if, for all subspecifications $\varphi_i \in \mathcal{S}$, $\varphi_i$ is realizable.
\end{theorem}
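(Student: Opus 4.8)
The plan is to prove the three structural properties (disjoint outputs, full coverage of variables, and equirealizability) and termination, by tracing through what \Cref{alg:rewriting-based_decomposition} actually constructs. Termination is immediate: the rewriting step produces a finite conjunction, the dependency graph has finite vertex set $\outputs$ so it has finitely many connected components, and each of the two nested \textbf{foreach} loops ranges over finite lists, so the algorithm halts.

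For the structural properties I would argue as follows. First, I would establish that every conjunct $\psi$ of the rewritten $\varphi$ is assigned to exactly one subspecification. Existence of an assignment holds because the components list is augmented with $\var{inp}$ in \Cref{alg:rewriting-based_decomposition:zip}: any conjunct whose propositions meet some output component lands there, and any conjunct over inputs only meets $\var{inp}$. Uniqueness follows from the \emph{break} in \Cref{alg:rewriting-based_decomposition:break}, which stops the inner loop after the first match. The key point for disjointness of outputs is that the connected components of $\depGraph{\varphi}$ partition $\outputs$: if an output $a$ occurs in a conjunct $\psi$, then by the edge relation of $\depGraph{\varphi}$ all outputs of $\psi$ lie in the same connected component as $a$. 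Hence $\psi$ can meet at most one output component, so $\psi$ is placed in the subspecification of that component, and distinct output components—being disjoint vertex sets—yield subspecifications with disjoint output alphabets. This gives $V_i \cap V_j \subseteq \inputs$ for $i \neq j$. Coverage $\bigcup_i V_i = V$ follows because, under the standing assumption $V = \propositions{\varphi}$, every variable occurs in some conjunct, and that conjunct is assigned to some $\varphi_i$; inputs in particular are caught by the $\var{inp}$ slot.

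For equirealizability I would lift \Cref{cor:equisynthesizeability_independent_conjuncts} from two conjuncts to the $k$-way partition by induction. Writing $\varphi \equiv \varphi_1 \land \dots \land \varphi_k$ (valid because rewriting preserves the language and grouping conjuncts is just reassociation of $\land$), I would split off $\varphi_1$ from $\varphi_2 \land \dots \land \varphi_k$. The two sides have output-disjoint variable sets by the partition argument above and together cover $V$, so \Cref{cor:equisynthesizeability_independent_conjuncts} applies and gives that $\varphi$ is realizable iff $\varphi_1$ and $\varphi_2 \land \dots \land \varphi_k$ both are. Iterating yields realizability of $\varphi$ iff every $\varphi_i$ is realizable.

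The main obstacle I anticipate is making the partition claim fully rigorous, namely that the connected components of $\depGraph{\varphi}$ induce a clean partition of the \emph{output} propositions actually appearing across conjuncts, and that each conjunct's output propositions live in a single component. This hinges on the precise reading of the edge relation $(a,b) \in E$ iff $a,b \in \propositions{\varphi_i}$ for some conjunct $\varphi_i$: within any single conjunct its outputs form a clique, so they share a component, which is exactly what forces a conjunct into one output slot. A secondary subtlety worth a sentence is the standing assumption $V = \propositions{\varphi}$ invoked at the start of this section, since the coverage claim $\bigcup_i V_i = V$ relies on it; I would note explicitly that the final \emph{map} in the algorithm recomputes $V_i$ as the inputs and outputs occurring in the assembled $\varphi_i$, so $V_i = \propositions{\varphi_i}$ and the union over all $i$ recovers $\propositions{\varphi} = V$.
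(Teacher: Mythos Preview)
Your proposal is correct and follows essentially the same approach as the paper's own proof: both argue disjointness of outputs via the connected-component partition of $\depGraph{\varphi}$, coverage via the appended $\var{inp}$ slot together with the standing assumption $V = \propositions{\varphi}$, and equirealizability by (iterated) application of \Cref{cor:equisynthesizeability_independent_conjuncts}. Your write-up is in fact more explicit than the paper's---you spell out termination, the clique observation that forces a conjunct's outputs into a single component, and the induction for the $k$-way split---but the underlying argument is the same.
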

\begin{proof}
	Since an output variable is part of exactly one connected component and since all conjuncts containing an output are contained in the same subspecification, every output is part of exactly one subspecification. Therefore, $V_i \cap V_j \subseteq \inputs$ holds for $1 \leq i,j \leq k$ with $i \neq j$.
	Moreover, the last component added in \Cref{alg:rewriting-based_decomposition:zip} contains all inputs. Hence, all variables that occur in a conjunct of $\varphi$ are featured in at least one subspecification. Thus, $\bigcup_{1\leq i \leq k} V_i = \propositions{\varphi}$ holds and hence, since $V = \propositions{\varphi}$ by assumption, $\bigcup_{1\leq i \leq k} V_i = V$ follows.
	Therefore, equirealizability of $\varphi$ and the formulas in $\mathcal{S}$ directly follows with \Cref{cor:equisynthesizeability_independent_conjuncts}.\qed
\end{proof}

While \Cref{alg:rewriting-based_decomposition} is simple and ensures soundness and completeness of modular synthesis, it strongly depends on the structure of the formula:
When rewriting formulas in assume-guarantee format, \ie, formulas of the form $\varphi = \bigwedge^m_{i=1} \varphi_i \rightarrow \bigwedge^n_{j=1} \psi_j$, to a conjunctive form, the conjuncts contain both assumptions $\varphi_i$ and guarantees~$\psi_j$. Hence, if $a,b \in \outputs$ occur in assumption~$\varphi_i$ and guarantee $\psi_j$, respectively, they are dependent. Thus, all conjuncts featuring $a$ or $b$ are contained in the same subspecification according to \Cref{alg:rewriting-based_decomposition}. Yet, $\psi_j$ might be realizable even without~$\varphi_i$. An algorithm accounting for this might yield further decompositions and thus smaller synthesis subtasks.

In the following, we present a criterion for dropping assumptions while maintaining equirealizability. Intuitively, we can drop an assumption $\varphi$ for a guarantee~$\psi$ if they do not share any variable. However, if $\varphi$ can be violated by the system, \ie, if $\neg \varphi$ is realizable, equirealizability is not guaranteed when dropping $\varphi$. For instance, consider the formula $\varphi = \Eventually(i_1 \land o_1) \rightarrow \Globally (i_2 \land o_2)$, where $\inputs = \{i_1,i_2\}$ and $\outputs = \{o_1,o_2\}$. Although assumption and guarantee do not share any variables, the assumption cannot be dropped: An implementation that never sets $o_1$ to $\true$ satisfies $\varphi$ but $\Globally(i_2 \land o_2)$ is not realizable.
Furthermore, dependencies between input variables may yield unrealizability if an assumption is dropped as information about the remaining inputs might get lost. For instance, in the formula $\varphi \rightarrow \psi$ with $\varphi \! = \! (\Globally i_1 \! \rightarrow i_2) \land (\neg\Globally i_1 \! \rightarrow i_3) \land (i_2 \! \leftrightarrow i_4) \land (i_3 \! \leftrightarrow \neg i_4)$ and $\psi = \Globally i_1 \leftrightarrow o$, where $\inputs = \{i_1,i_2,i_3,i_4\}$ and $\outputs = \{o\}$, no assumption can be dropped: Otherwise the information about the global behavior of $i_1$, which is crucial for the existence of an implementation, is incomplete.
These observations lead to the following criterion for safely dropping assumptions. 

\begin{lemma}\label{lem:assumption_dropping}
	Let $\varphi = (\varphi_1 \land \varphi_2) \rightarrow \psi$ be an LTL~formula with $\propositions{\varphi_1} \cap \propositions{\varphi_2} = \emptyset$, $\propositions{\varphi_2} \cap \propositions{\psi} = \emptyset$. Let~$\neg\varphi_2$ be unrealizable.
	Then, $\varphi_1 \rightarrow \psi$ is realizable if, and only if, $\varphi$ is realizable.
\end{lemma}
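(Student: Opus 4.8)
The plan is to prove the two implications of the stated equivalence separately. One direction is a purely logical observation, whereas the other---deriving realizability of $\varphi_1 \rightarrow \psi$ from realizability of $\varphi$---is the actual content of the lemma and is where the unrealizability of $\neg\varphi_2$ enters. Throughout I would write $V_2 = \propositions{\varphi_2}$, $I_2 = V_2 \cap \inputs$, and $O_2 = V_2 \cap \outputs$; the two disjointness hypotheses give $\propositions{\varphi_1} \cap V_2 = \emptyset$ and $\propositions{\psi} \cap V_2 = \emptyset$, so that neither $\varphi_1$ nor $\psi$ depends on the valuation of any variable in $V_2$. Since the section assumes $V = \propositions{\varphi}$, we moreover have the clean decomposition $V = (\propositions{\varphi_1} \cup \propositions{\psi}) \mathbin{\dot\cup} V_2$.

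For the direction from $\varphi_1 \rightarrow \psi$ to $\varphi$, I would argue on the language level: every word that falsifies $\varphi_1$ or satisfies $\psi$ also falsifies $\varphi_1 \land \varphi_2$ or satisfies $\psi$, so $\mathcal{L}(\varphi_1 \rightarrow \psi) \subseteq \mathcal{L}((\varphi_1 \land \varphi_2) \rightarrow \psi) = \mathcal{L}(\varphi)$. Hence any implementation realizing $\varphi_1 \rightarrow \psi$ over $\propositions{\varphi_1} \cup \propositions{\psi}$ extends to an implementation over $V$---ignoring the inputs in $I_2$ and choosing arbitrary valuations for the outputs in $O_2$---all of whose compatible words satisfy $\varphi$. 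This direction does not use the unrealizability hypothesis at all.

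For the converse, let $f$ realize $\varphi$. Since $\neg\varphi_2$ mentions only variables in $V_2$, I would regard it as a specification over $V_2$ with inputs $I_2$ and outputs $O_2$; its unrealizability then yields, by the preliminaries, a counterstrategy $f^c_2 \colon (2^{V_2})^* \rightarrow 2^{I_2}$ every compatible word of which satisfies $\varphi_2$. I would build an implementation $g$ for $\varphi_1 \rightarrow \psi$ that runs $f$ on a \emph{fabricated} input stream: at step $n$, $g$ discards the real $I_2$-input, replaces it by the value $f^c_2$ prescribes for the simulated $V_2$-history, feeds $f$ this fabricated input together with the genuine inputs on $\inputs \setminus I_2$, and outputs exactly what $f$ returns (on all of $\outputs$, including $O_2$). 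The crucial point is that $g$ can recompute the fabricated stream step by step from the real history---the outputs and the $\inputs \setminus I_2$-inputs are recorded there, and the fabricated $I_2$-inputs are determined recursively by $f^c_2$---so $g$ is a well-defined implementation.

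It then remains to verify that $g$ realizes $\varphi_1 \rightarrow \psi$. I would fix a compatible word $\sigma$ of $g$, let $\hat\sigma$ be the simulated run it induces, and note that by construction $\sigma$ and $\hat\sigma$ agree on every variable except those in $I_2$. The run $\hat\sigma$ is compatible with $f$, hence $\hat\sigma \in \mathcal{L}(\varphi)$, and its restriction $\hat\sigma \cap V_2$ is compatible with $f^c_2$, hence $\hat\sigma \models \varphi_2$; together these force $\hat\sigma \models \varphi_1 \rightarrow \psi$. Because $\varphi_1$ and $\psi$ do not depend on $V_2 \supseteq I_2$, their truth values are identical on $\sigma$ and $\hat\sigma$, so $\sigma \models \varphi_1 \rightarrow \psi$ as well. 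The main obstacle I expect is precisely this bookkeeping: arguing that $g$ is a legitimate history-based strategy (the simulated history is reconstructible, and the timing of the Moore-style environment strategy $f^c_2$ interleaved with the Mealy-style $f$ is consistent and non-circular), and invoking the two disjointness hypotheses at exactly the right places, so that enforcing $\varphi_2$ on the fabricated run never perturbs the truth values of $\varphi_1$ or $\psi$ on the real run.
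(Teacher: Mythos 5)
Your proposal is correct and takes essentially the same route as the paper's proof: the easy direction via the inclusion $\mathcal{L}(\varphi_1 \rightarrow \psi) \subseteq \mathcal{L}(\varphi)$ together with padding by arbitrary valuations on $\propositions{\varphi_2}$, and the hard direction by running $f$ on a fabricated input stream whose $I_2$-component is dictated by the counterstrategy $f^c_2$ for $\neg\varphi_2$, so that the simulated run satisfies $\varphi_2$ and hence $\varphi_1 \rightarrow \psi$, which transfers to the real run because $\varphi_1$ and $\psi$ do not depend on $\propositions{\varphi_2}$. Your ``simulated history'' bookkeeping is exactly what the paper formalizes as the lifting function $h$ defining $\hat{\eta}$; the only cosmetic difference is that your $g$ also emits the $O_2$-outputs of $f$ instead of projecting them away, which changes nothing.
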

%

\begin{proof}
    Let $V_1 := \propositions{\varphi_1} \cup \propositions{\psi}$, $\inputs_1 := \inputs \cap V_1$, and $\outputs_1 := \outputs \cap V_1$.
    First, let $\varphi_1 \rightarrow \psi$ be realizable. Then there is an implementation $f_1: (2^{V_1})^* \times 2^{\inputs_1} \rightarrow 2^{\outputs_1}$ that realizes $\varphi_1 \rightarrow \psi$.
    From $f_1$, we construct a strategy $f:(2^V)^* \times 2^\inputs \rightarrow 2^\outputs$ as follows: Let $\mu \in 2^{O \setminus O_1}$ is an arbitrary valuation of the outputs outside of $O_1$.
 	Then, let $(\sigma, \inp{i}) := f_1(\restrict{\sigma}{V_1},\inp{i}\cap I_1) \cup \mu$.
    Let $\sigma \in \compatibleWords{f}$. Then we have $f(\sigma_1 \dots \sigma_{n-1}, \sigma_n \cap \inputs) = \sigma_n \cap \inputs$ for all $n \in \mathbb{N}$ and thus $f_1(\restrict{(\sigma_1 \dots \sigma_{n-1})}{V_1}, \sigma \cap I_1)= \sigma_n \cap (\inputs \cap V_1)$ follows by construction of $f$. Hence, $\restrict{\sigma}{V_1} \in \compatibleWords{f_1}$ holds and thus, since $f_1$ realizes $\varphi_1 \rightarrow \psi$ by assumption, $\restrict{\sigma}{V_1} \in \mathcal{L}(\varphi_1 \rightarrow \psi)$.
    Since $\propositions{\varphi_1} \cap \propositions{\varphi_2} = \emptyset$ and $\propositions{\varphi_2} \cap \propositions{\psi} = \emptyset$, we have $\propositions{\varphi_2} \cap V_1 = \emptyset$. Hence, the valuations of the variables in $\propositions{\varphi_2}$ do not affect the satisfaction of $\varphi_1 \rightarrow \psi$. Thus, we have $(\restrict{\sigma}{V_1}) \cup \sigma' \in \mathcal{L}(\varphi_1 \rightarrow \psi)$ for any $\sigma' \in (2^\propositions{\varphi_2})^\omega$. In particular, $(\restrict{\sigma}{V_1}) \cup (\restrict{\sigma}{\propositions{\varphi_2}}) \in \mathcal{L}(\varphi_1 \rightarrow \psi)$.
    Since $\propositions{\varphi} = V$ by assumption, $V = V_1 \cup \propositions{\varphi_2}$ holds and thus $(\restrict{\sigma}{V_1}) \cup (\restrict{\sigma}{\propositions{\varphi_2}}) = \sigma$.
    Hence, $\sigma \in \mathcal{L}(\varphi_1 \rightarrow \psi)$ holds and thus, since $\varphi_1 \rightarrow \psi$ implies $(\varphi_1 \land \varphi_2) \rightarrow \psi$, $\sigma \in \mathcal{L}(\varphi)$ follows. Hence, $f$ realizes $\varphi$.
    
    Next, let $(\varphi_1 \land \varphi_2) \rightarrow \psi$ be realizable. Then, there is an implementation $f: (2^V)^* \times 2^\inputs \rightarrow 2^\outputs$ that realizes $(\varphi_1 \land \varphi_2) \rightarrow \psi$.
    Since $\neg\varphi_2$ is unrealizable, there is a counterstrategy $f^c_2: (2^{\propositions{\varphi_2}})^* \rightarrow 2^{\inputs \cap \propositions{\varphi_2}}$ for $\neg \varphi_2$ and all words compatible with $f^c_2$ satisfy $\varphi_2$.
    Given a finite sequence $\eta \in (2^{V_1})^*$, let $\hat{\eta} \in (2^V)^*$ be the sequence obtained by lifting $\eta$ to $V$ using the output of~$f^c_2$. Formally, let $\hat{\eta} = h(\varepsilon,\eta)$, where $h: (2^V)^* \times (2^{V_1})^* \rightarrow (2^V)^*$ is a function defined by $h(\tau,\varepsilon) = \tau$ for the empty word~$\varepsilon$ and, when $\boldsymbol{\cdot}: V \times V^* \rightarrow V^*$ denotes concatenation, $h(\tau,s \boldsymbol{\cdot} \eta) = h(\tau \boldsymbol{\cdot} ((s \cap \inputs) \cup c \cup f(\tau, ((s \cap \inputs) \cup c) \cap \inputs)), \eta)$ with $c = f^c_2(\restrict{\tau}{\propositions{\varphi_2}})$.
    We construct an implementation $g: (2^{V_1})^* \times 2^{\inputs_1} \rightarrow 2^{\outputs_1}$ based on $f$ and $\hat{\eta}$ as follows: $g(\eta,\inp{i}) := f(\hat{\eta}, \inp{i} \cup (f^c_2(\hat{\eta}) \cap \inputs)) \cap \outputs_1$.
    Let $\sigma \in \compatibleWords{g}$. Let $\sigma_\mathit{f}$ be the corresponding infinite sequence obtained from $g$ when not restricting the output of $f$ to $\outputs_1$. Hence, $\sigma_\mathit{f} \cap V_1 = \sigma$.
    Clearly, by construction of $g$, we have $\sigma_\mathit{f} \in \compatibleWords{f}$ and hence, since~$f$ realizes~$\varphi$ by assumption, $\sigma_\mathit{f} \in \mathcal{L}(\varphi)$.
    Furthermore, we have $\sigma_\mathit{f} \in \mathcal{L}(\varphi_2)$ by construction of $g$ since $\hat{\eta}$ forces~$f$ to satisfy~$\varphi_2$.
    Hence, $\sigma_\mathit{f} \in \mathcal{L}(\varphi_1 \rightarrow \psi)$.
    Since $\varphi_2$ neither shares variables with $\varphi_1$ nor with $\psi$ by assumption, the satisfaction of $\varphi_1 \rightarrow \psi$ is not influenced by the variables outside of $V_1$. Thus, since we have $\sigma_\mathit{f} \cap V_1 = \sigma$ by construction, $\sigma \in \mathcal{L}(\varphi_1 \rightarrow \psi)$ follows.
    Hence, $g$ realizes $\varphi_1 \rightarrow \psi$.\qed
\end{proof}

By dropping assumptions, we are able to decompose LTL formulas of the form $\varphi = \bigwedge^m_{i=1} \varphi_i \rightarrow \bigwedge^n_{j=1} \psi_j$ in further cases: 
We rewrite $\varphi$ to $\bigwedge^n_{j=1}(\bigwedge^m_{i=1} \varphi_i \rightarrow \psi_j)$ and then drop assumptions for the individual guarantees. If the resulting subspecifications only share input variables, they are equirealizable to $\varphi$. 

\begin{theorem}\label{thm:ltl_decomposition_with_assumptions}
	Let $\varphi = (\varphi_1 \land \varphi_2 \land \varphi_3) \rightarrow (\psi_1 \land \psi_2)$ be an LTL formula over $V$, where $\propositions{\varphi_3} \subseteq \inputs$ and $\propositions{\psi_1} \cap \propositions{\psi_2} \subseteq \inputs$. 
	Let $\propositions{\varphi_i} \cap \propositions{\varphi_j} = \emptyset$ for $i,j \in \{1,2,3\}$ with $i \neq j$, and $\propositions{\varphi_i} \cap \propositions{\psi_{3-i}} = \emptyset$ for $i \in \{1,2\}$. 
	Let $\neg(\varphi_1 \land \varphi_2 \land \varphi_3)$ be unrealizable. Then, $\varphi$ is realizable if, and only if, both $\varphi' = (\varphi_1 \land \varphi_3) \rightarrow \psi_1$ and $\varphi'' = (\varphi_2 \land \varphi_3) \rightarrow \psi_2$ are realizable.
\end{theorem}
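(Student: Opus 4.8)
The plan is to reduce $\varphi$ to the conjunction $\varphi' \land \varphi''$ by two applications of the assumption-dropping lemma and then invoke \Cref{cor:equisynthesizeability_independent_conjuncts}. First I would rewrite $\varphi$ into the logically equivalent form $A \land B$ with $A = (\varphi_1 \land \varphi_2 \land \varphi_3) \rightarrow \psi_1$ and $B = (\varphi_1 \land \varphi_2 \land \varphi_3) \rightarrow \psi_2$, using the tautology $(\chi \rightarrow (\psi_1 \land \psi_2)) \equiv (\chi \rightarrow \psi_1) \land (\chi \rightarrow \psi_2)$. Since realizability is preserved under logical equivalence, $\varphi$ is realizable if and only if $A \land B$ is, and moreover $\mathcal{L}(A \land B) \subseteq \mathcal{L}(A)$ and $\mathcal{L}(A \land B) \subseteq \mathcal{L}(B)$.

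Next I would establish the side conditions for dropping assumptions. The key observation is that $\neg(\varphi_1 \land \varphi_2 \land \varphi_3)$ being unrealizable forces both $\neg\varphi_1$ and $\neg\varphi_2$ to be unrealizable: an implementation forcing, say, $\neg\varphi_2$ on the outputs in $\propositions{\varphi_2}$ (extended arbitrarily elsewhere) would have all its compatible words violate $\varphi_2$, hence violate $\varphi_1 \land \varphi_2 \land \varphi_3$, contradicting the hypothesis; this uses that $\varphi_2$ depends only on $\propositions{\varphi_2}$. With this in hand I apply \Cref{lem:assumption_dropping} to $A$, read as $((\varphi_1 \land \varphi_3) \land \varphi_2) \rightarrow \psi_1$, dropping $\varphi_2$: the hypotheses $\propositions{\varphi_1 \land \varphi_3} \cap \propositions{\varphi_2} = \emptyset$ and $\propositions{\varphi_2} \cap \propositions{\psi_1} = \emptyset$ follow from pairwise disjointness and from $\propositions{\varphi_2} \cap \propositions{\psi_1} = \emptyset$, while $\neg\varphi_2$ is unrealizable by the previous step. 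Hence $A$ is equirealizable to $\varphi'$, and symmetrically dropping $\varphi_1$ from $B$ shows $B$ equirealizable to $\varphi''$.

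It then remains to connect $A \land B$ with $\varphi' \land \varphi''$. On the one hand, restricting a realizing implementation of $A \land B$ shows that $A$ and $B$ are individually realizable, so by the lemma $\varphi'$ and $\varphi''$ are realizable. I would then check that $\varphi'$ over $W_1 = \propositions{\varphi_1} \cup \propositions{\varphi_3} \cup \propositions{\psi_1}$ and $\varphi''$ over $W_2 = \propositions{\varphi_2} \cup \propositions{\varphi_3} \cup \propositions{\psi_2}$ satisfy $W_1 \cap W_2 \subseteq \inputs$ (using $\propositions{\varphi_3} \subseteq \inputs$, $\propositions{\psi_1} \cap \propositions{\psi_2} \subseteq \inputs$, the pairwise disjointness of the $\varphi_i$, and $\propositions{\varphi_i} \cap \propositions{\psi_{3-i}} = \emptyset$) and $W_1 \cup W_2 = V$, so that \Cref{cor:equisynthesizeability_independent_conjuncts} yields realizability of $\varphi' \land \varphi''$. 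On the other hand, $\mathcal{L}(\varphi' \land \varphi'') \subseteq \mathcal{L}(\varphi)$ holds semantically: if both implications with the weaker antecedents hold and $\varphi_1 \land \varphi_2 \land \varphi_3$ is true, then $\varphi_1 \land \varphi_3$ and $\varphi_2 \land \varphi_3$ are true, forcing $\psi_1$ and $\psi_2$. Hence realizability of $\varphi' \land \varphi''$ implies that of $\varphi$, and chaining the implications closes the cycle and gives the biconditional.

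I expect the main obstacle to be precisely that $A$ and $B$ cannot be decomposed directly, since they may share output variables through $\varphi_1$ and $\varphi_2$; only after dropping assumptions do the subspecifications share solely inputs, which is what makes \Cref{cor:equisynthesizeability_independent_conjuncts} applicable. A secondary point requiring care is the variable-set bookkeeping when invoking \Cref{lem:assumption_dropping} on the subformulas $A$ and $B$, whose propositions may be proper subsets of $V$; this is handled routinely by the standard extension of implementations to the full variable set noted at the start of the LTL decomposition section.
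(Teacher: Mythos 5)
Your proposal is correct and follows essentially the same route as the paper's proof: both directions hinge on \Cref{lem:assumption_dropping} to drop $\varphi_{3-i}$ from $(\varphi_1 \land \varphi_2 \land \varphi_3) \rightarrow \psi_i$, and your converse direction—\Cref{cor:equisynthesizeability_independent_conjuncts} plus the inclusion $\mathcal{L}(\varphi' \land \varphi'') \subseteq \mathcal{L}(\varphi)$—is just a repackaging of the paper's explicit union construction of an implementation from $f_1$ and $f_2$. If anything, you are slightly more careful than the paper: you explicitly derive that $\neg\varphi_1$ and $\neg\varphi_2$ are unrealizable from the unrealizability of $\neg(\varphi_1 \land \varphi_2 \land \varphi_3)$ before invoking the lemma, a side condition the paper's proof leaves implicit.
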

%

\begin{proof}
    Define $V_i = \propositions{\varphi_i} \cup \propositions{\varphi_3} \cup \propositions{\psi_3}$ for $i \in \{1,2\}$. Since we have $V = \propositions{\varphi}$ by assumption, $V_1 \cup V_2 = V$ holds.
    With the assumptions made on $\varphi_1$, $\varphi_2$, $\varphi_3$, $\psi_1$, and $\psi_2$, we obtain $V_1 \cap V_2 \subseteq \inputs$.
    
    First, let $\varphi$ be realizable and let $f:(2^V)^* \times 2^\inputs \rightarrow 2^\outputs$ be an implementation that realizes $\varphi$. Let $\sigma \in \compatibleWords{f}$. Then, $\sigma \in \mathcal{L}(\varphi)$ and thus by the semantics of implication, $\restrict{\sigma}{(V \setminus \propositions{\psi_{3-i}})} \in \mathcal{L}((\varphi_1 \land \varphi_2 \land \varphi_3) \rightarrow \psi_i)$ follows for $i \in \{1,2\}$. Hence, an implementation $f_i$ that behaves as $f$ restricted to $\outputs \setminus \propositions{\psi_{3-i}}$ realizes $(\varphi_1 \land \varphi_2 \land \varphi_3) \rightarrow \psi_i$.
    By \Cref{lem:assumption_dropping}, $(\varphi_1 \land \varphi_2 \land \varphi_3) \rightarrow \psi_i$ and $(\varphi_i \land \varphi_3) \rightarrow \psi_i$ are equirealizable since $\varphi_1$, $\varphi_2$, and~$\varphi_3$ as well as $\varphi_{3-i}$ and $\psi_i$ do not share any variables. Thus, there exist implementations $f_1$ and $f_2$ realizing $(\varphi_1 \land \varphi_3) \rightarrow \psi_1$ and $(\varphi_2 \land \varphi_3) \rightarrow \psi_2$, respectively.
    
    Next, let both $(\varphi_1 \land \varphi_3) \rightarrow \psi_1$ and $(\varphi_2 \land \varphi_3) \rightarrow \psi_2$ be realizable and let $f_i: (2^{V_i})^* \times 2^{\inputs \cap V_i} \rightarrow 2^{\outputs \cap V_i}$ be an implementation realizing $(\varphi_i \land \varphi_3) \rightarrow \psi_i$. We construct an implementation $f:(2^V)^* \times 2^\inputs \rightarrow 2^\outputs$ from $f_1$ and $f_2$ as follows: $f(\sigma,\inp{i}) := f_1(\restrict{\sigma}{V_1},\inp{i} \cap V_1) \cup f_2(\restrict{\sigma}{V_2},\inp{i} \cap V_2)$. 
    Let $\sigma \in \compatibleWords{f}$.
    Since $V_1$ and $V_2$ do not share any output variables, $\restrict{\sigma}{V_i} \in \mathcal{L}((\varphi_i \land \varphi_3) \rightarrow \psi_i)$ follows from the construction of $f$. Moreover, $\restrict{\sigma}{V_1}$ and $\restrict{\sigma}{V_2}$ agree on shared variables and thus $(\restrict{\sigma}{V_1}) \cup (\restrict{\sigma}{V_2}) \in \mathcal{L}(\varphi' \land\varphi'')$ holds. Therefore, we have $(\restrict{\sigma}{V_1}) \cup (\restrict{\sigma}{V_2}) \in \mathcal{L}(\varphi)$ as well by the semantics of conjunction and implication. Since $V_1 \cup V_2 = V$, we have $(\restrict{\sigma}{V_1}) \cup (\restrict{\sigma}{V_2}) = \sigma$ and thus $\sigma \in \mathcal{L}(\sigma)$. Hence, $f$ realizes $\varphi$.\qed
\end{proof}

Analyzing assumptions thus allows for decomposing LTL formulas in further cases and still ensures soundness and completeness of modular synthesis.
In the following, we present an optimized LTL decomposition algorithm that incorporates assumption dropping into the search for independent conjuncts. 
Intuitively, the algorithm needs to identify variables that cannot be shared safely among subspecifications. If an \emph{assumption} contains such non-sharable variables, we say that it is \emph{bound} to guarantees since it can influence the possible decompositions. Otherwise, it is called \emph{free}.


%

To determine which assumptions are relevant for decomposition, \ie, which assumptions are \emph{bounded assumptions}, we build a slightly modified version of the dependency graph that is only based on assumptions and not on all conjuncts of the formula. Moreover, all variables serve as the nodes of the graph, not only the output variables. An undirected edge between two variables in the modified dependency graph denotes that the variables occur in the same assumption.
Variables that are contained in the same connected component as an output variable $o\in\outputs$ are thus connected to $o$ over a path of one or more assumptions. Therefore, they may not be shared among subspecifications as they might influence $o$ and thus may influence the decomposability of the specification.
These variables are then called \emph{decomposition-critical}.
Given the modified dependency graph, we can compute the decomposition-critical propositions with a simple depth-first search.

\begin{algorithm}[t]
    \DontPrintSemicolon
    \KwIn{$\varphi$: LTL, \var{inp}: List Variable, \var{out}: List Variable}
    \KwResult{\var{specs}: List (LTL, List Variable, List Variable)}
    \var{assumptions} $\leftarrow$ getAssumptions($\varphi$)\;
    \var{guarantees} $\leftarrow$ getGuarantees($\varphi$)\;
    \var{decCritProps} $\leftarrow$ getDecCritProps($\varphi$)\;
    \var{graph} $\leftarrow$ buildDependencyGraph($\varphi$,\var{decCritProps})\;
    \var{components} $\leftarrow$ \var{graph}.connectedComponents()\;
    \var{specs} $\leftarrow$ new LTL[$|$\var{components}$|+1$]\;
    \var{freeAssumptions} $\leftarrow$[\ ]\;
    \ForEach{\upshape{$\psi \in$ \var{assumptions}}}{
        \var{propositions} $\leftarrow$ \var{decCritProps} $\cap$ getProps($\psi$)\;
        \eIf{\upshape{$|$\var{propositions}$| = 0$}}{
            \var{freeAssumptions}.append($\psi$)\;
        }{
            \ForEach{\upshape{(\var{spec}, \var{set}) $\in$ zip \var{specs} (\var{components} $++$ [\var{inp}])}}{
                \If{\upshape{\var{propositions} $\cap$ \var{set} $\neq \emptyset$}}{
                    \var{spec}.addAssumption($\psi$)\;
                    break\;
                }
            }	
        }
    }
    \ForEach{\upshape{$\psi \in$ \var{guarantees}}}{
        \var{propositions} $\leftarrow$ \var{decCritProps} $\cap$ getProps($\psi$)\;
        \ForEach{\upshape{(\var{spec}, \var{set}) $\in$ zip \var{specs} (\var{components} $++$ [\var{inp}])}}{
            \If{\upshape{\var{propositions} $\cap$ \var{set} $\neq \emptyset$}}{
                \var{spec}.addGuarantee($\psi$)\;
                break\;
            }
        }
        
    }
    \KwRet{\upshape{addFreeAssumptions \var{specs freeAssumptions}}}

    \caption{Optimized LTL Decomposition Algorithm}
    \label{alg:optimized_decomposition}
\end{algorithm} 

After computing the decomposition-critical propositions, we create the dependency graph and extract connected components in the same way as in \Cref{alg:rewriting-based_decomposition} to decompose the LTL specification. Instead of using only output variables as nodes of the graph, though, we use all decomposition-critical variables.
We then exclude free assumptions and add all other assumptions to their respective subspecification similar to \Cref{alg:rewriting-based_decomposition}. We assign the guarantees to their subspecification in the same manner.
%
%
%
Lastly, we add the remaining assumptions. Since all of these assumptions are free, they could be safely added to all subspecifications. Yet, to obtain small subspecifications, we only add them to subspecifications for which they are needed. Note that we have to add all assumptions featuring an input variable that occurs in the subspecification. Therefore, we analyze the assumptions and add them in one step, as a naive approach could have an unfavorable running time. 
The whole LTL decomposition algorithm with optimized assumption handling is shown in \Cref{alg:optimized_decomposition}.

The decomposition algorithm does not check for assumption violations. 
The unrealizability of the negation of the dropped assumption, however, is an essential part of the criterion for assumption dropping (c.f.\ \Cref{thm:ltl_decomposition_with_assumptions}).
Therefore, we incorporate the check for assumption violations into the modular synthesis algorithm: Before decomposing the specification, we perform synthesis on the negated assumptions. If synthesis returns that the negated assumptions are realizable, the system is able to violate an assumption. The implementation satisfying the negated assumptions is then extended to an implementation for the whole specification that violates the assumptions and thus realizes the specification. Otherwise, if the negated assumptions are unrealizable, the conditions of \Cref{thm:ltl_decomposition_with_assumptions} are satisfied. Hence, we can use the decomposition algorithm and proceed as in \Cref{alg:compositional_synthesis}. The modified modular synthesis algorithm that incorporates the check for assumption violations is shown in \Cref{alg:compositional_synthesis2}.

\begin{algorithm}[t]
    \DontPrintSemicolon
    \KwIn{\var{s}: Specification, \var{inp}: List Variable, \var{out}: List Variable}
    \KwResult{\var{realizable}: Bool, \var{implementation}: $\mathcal{T}$}
    
    (\var{real}, \var{strat}) $\leftarrow$ synthesize(getNegAss($\varphi$), \var{inp}, \var{out})\;
    \If{\upshape{\var{real}}}{
        \KwRet{\upshape{($\top$, \var{strat})}}
    }
    
    \var{subspecifications} $\leftarrow$ decompose$(\var{s},\var{inp},\var{out})$ \;
    \var{sub\_results} $\leftarrow$ map synthesize \var{subspecifications} \;
    \ForEach{\upshape{(\var{real}, \var{strat}) $\in$ \var{sub\_results}}}{
        \If{\upshape{! \var{real}}}{
            \var{implementation} $\leftarrow$ extendCounterStrategy(\var{strat}, \var{s})\;
            \KwRet{\upshape{($\bot$, \var{implementation})}}
        }
    }
    \var{impls} $\leftarrow$ map second \var{sub\_results}\;
    \var{implementation} $\leftarrow$ compose \var{impls}\;
    \KwRet{\upshape{($\top$, \var{implementation})}}
    
    \caption{Modular Synthesis Algorithm with Optimized LTL Decomposition}
    \label{alg:compositional_synthesis2}
\end{algorithm}


Note that \Cref{alg:optimized_decomposition} is only applicable to specifications in a strict assume-guarantee format since \Cref{thm:ltl_decomposition_with_assumptions} assumes a top-level implication in the formula.
In the next section, we thus present an extension of the LTL decomposition algorithm with optimized assumption handling to specifications consisting of several assume-guarantee conjuncts, \ie, specifications of the form $\varphi = (\varphi_1 \rightarrow \psi_1) \land \dots \land (\varphi_k \rightarrow \psi_k)$.


\section{Optimized LTL Decomposition for Formulas with Several Assume-Guarantee Conjuncts}\label{sec:ltl_optimized}

Since \Cref{cor:equisynthesizeability_independent_conjuncts} can be applied recursively, classical LTL decomposition, \ie, as described in \Cref{alg:rewriting-based_decomposition}, is applicable to specifications with several conjuncts. That is, in particular, it is applicable to specifications with several assume-guarantee conjuncts, \ie, specifications of the form $\varphi = (\varphi_1 \rightarrow \psi_1) \land \dots \land (\varphi_k \rightarrow \psi_k)$.
\Cref{alg:optimized_decomposition}, in contrast, is restricted to LTL specifications consisting of a single assume-guarantee pair since \Cref{thm:ltl_decomposition_with_assumptions}, on which \Cref{alg:optimized_decomposition} relies, assumes a top-level implication in the specification.
Hence, we cannot apply the optimized assumption handling to specifications with several assume-guarantee conjuncts directly.

A naive approach to extend assumption dropping to formulas with several assume-guarantee conjuncts is to first drop assumptions for all conjuncts separately and then to decompose the resulting specification using \Cref{alg:rewriting-based_decomposition}.
In general, however, this is not sound: The other conjuncts may introduce dependencies between assumptions and guarantees that prevent the dropping of the assumption. When considering the conjuncts during the assumption dropping phase separately, however, such dependencies are not detected.
For instance, consider a system with $\inputs = \{i\}$, $\outputs = \{o_1,o_2\}$, and the specification $\varphi = \Globally\neg(o_1 \land o_2) \land \Globally \neg(i \leftrightarrow o_1) \land (\Globally i \rightarrow \Globally o_2)$.
Clearly, $\varphi$ is realizable by an implementation that sets $o_1$ to $\neg i$ and $o_2$ to $i$ in every time step.
Since the first conjunct contains both $o_1$ and $o_2$, \Cref{cor:equisynthesizeability_independent_conjuncts} is not applicable and thus \Cref{alg:rewriting-based_decomposition} does not decompose~$\varphi$.
The naive approach for incorporating assumption dropping described above considers the third conjunct of $\varphi$ separately and checks whether whether the assumption $\Globally i$ can be dropped. Since the assumptions and guarantees do not share any variables, \Cref{lem:assumption_dropping} is applicable and thus the naive algorithm drops $\Globally i$, yielding $\varphi' = \Globally\neg(o_1 \land o_2) \land \Globally \neg(i \leftrightarrow o_1) \land \Globally o_2$. Yet, $\varphi'$ is not realizable: If $i$ is constantly set to $\false$, the second conjunct of $\varphi'$ enforces $o_1$ to be always set to $\true$. The third conjunct enforces that $o_2$ is constantly set to $\true$ irrespective of the input $i$. The first conjunct, however, requires in every time step one of the output variables to be $\false$.
Thus, although \Cref{lem:assumption_dropping} is applicable to $\Globally i \rightarrow \Globally o_1$, dropping the assumption safely is not possible in the context of the other two conjuncts.
In particular, the first conjunct of $\varphi$ introduces a dependency between $o_1$ and $o_2$ while the second conjunct introduces one between $i$ and $o_1$. Hence, there is a transitive dependency between $i$ and $o_1$ due to which the assumption $\Globally i$ cannot be dropped. This dependency is not detected when considering the conjuncts separately during the assumption dropping phase.

In this section, we introduce an optimization of the LTL decomposition algorithm which is able to decompose specifications with several conjuncts (possibly) in assume-guarantee format and which is, in contrast to the naive approach described before, sound.
Similar to the naive approach, the main idea is to first check for assumptions that can be dropped in the different conjuncts and to then perform the classical LTL decomposition algorithm. Yet, the assumption dropping phase is not performed completely separately for the individual conjuncts but takes the other conjuncts and thus possible transitive dependencies between the assumptions and guarantees into account.

If the other conjuncts do not share any variable with the assumption to be dropped, then there are no transitive dependencies between the assumption and the guarantee due to the other conjuncts. Thus, the assumption can be dropped safely if the other conditions of \Cref{lem:assumption_dropping} are satisfied:

\begin{lemma}\label{lem:assumption_dropping_optimized}
    Let $\varphi = \psi_1 \land ((\varphi_1 \land \varphi_2) \rightarrow \psi_2)$ be an LTL~formula, where we have $\propositions{\varphi_1} \cap \propositions{\varphi_2} = \emptyset$, $\propositions{\varphi_2} \cap \propositions{\psi_1} = \emptyset$ and $\propositions{\varphi_2} \cap \propositions{\psi_2} = \emptyset$. Let~$\neg\varphi_2$ be unrealizable.
    Then, $\varphi' = \psi_1 \land (\varphi_1 \rightarrow \psi_2)$ is realizable if, and only if, $\varphi$ is realizable.
\end{lemma}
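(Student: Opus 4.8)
The plan is to prove the two directions separately, following the structure of the proof of \Cref{lem:assumption_dropping} and simply carrying the unconditional conjunct $\psi_1$ along. Set $V_1 := \propositions{\psi_1} \cup \propositions{\varphi_1} \cup \propositions{\psi_2}$, so that $\varphi'$ is a formula over $V_1$. The three disjointness hypotheses together give $\propositions{\varphi_2} \cap V_1 = \emptyset$, and since $V = \propositions{\varphi}$ this yields a disjoint decomposition $V = V_1 \cup \propositions{\varphi_2}$. The crucial structural observation is that $\psi_1$ shares no variable with the dropped assumption $\varphi_2$, so forcing $\varphi_2$ (resp.\ ignoring the $\varphi_2$-variables) never interferes with the satisfaction of $\psi_1$; hence $\psi_1$ requires no new argument beyond the one already used for $\varphi_1 \rightarrow \psi_2$.

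For the direction ``$\varphi'$ realizable $\Rightarrow$ $\varphi$ realizable'', I would take an implementation $f_1$ over $V_1$ realizing $\varphi'$ and extend it to an implementation $f$ over $V$ by choosing arbitrary valuations for the outputs in $\propositions{\varphi_2}$, exactly as in the first half of \Cref{lem:assumption_dropping}. For any $\sigma \in \compatibleWords{f}$ we get $\restrict{\sigma}{V_1} \in \compatibleWords{f_1} \subseteq \mathcal{L}(\varphi')$; since $\varphi'$ depends only on $V_1$, this gives $\sigma \in \mathcal{L}(\varphi')$, hence $\sigma \in \mathcal{L}(\psi_1)$ and $\sigma \in \mathcal{L}(\varphi_1 \rightarrow \psi_2)$. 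As $\varphi_1 \rightarrow \psi_2$ entails $(\varphi_1 \land \varphi_2) \rightarrow \psi_2$ by weakening the antecedent, $\sigma \in \mathcal{L}(\varphi)$ follows, so $f$ realizes $\varphi$. This direction uses only the disjointness hypotheses, not the unrealizability of $\neg\varphi_2$.

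For the converse, which is where $\neg\varphi_2$ unrealizable is essential, I would reuse the counterstrategy-lifting construction from \Cref{lem:assumption_dropping}. Given an implementation $f$ realizing $\varphi$ and a counterstrategy $f^c_2 : (2^{\propositions{\varphi_2}})^* \rightarrow 2^{\inputs \cap \propositions{\varphi_2}}$ for $\neg\varphi_2$ (so every word compatible with $f^c_2$ satisfies $\varphi_2$), I would build an implementation $g$ over $V_1$ that simulates $f$ while feeding the $\propositions{\varphi_2}$-inputs according to $f^c_2$ and projecting the output to $\outputs \cap V_1$. For $\sigma \in \compatibleWords{g}$, the associated full word $\sigma_f \in \compatibleWords{f}$ satisfies $\restrict{\sigma_f}{V_1} = \sigma$, and by construction $\sigma_f \in \mathcal{L}(\varphi_2)$. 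From $\sigma_f \in \mathcal{L}(\varphi)$ we obtain $\sigma_f \in \mathcal{L}(\psi_1)$ and $\sigma_f \in \mathcal{L}((\varphi_1 \land \varphi_2) \rightarrow \psi_2)$; combining the latter with $\sigma_f \in \mathcal{L}(\varphi_2)$ yields $\sigma_f \in \mathcal{L}(\varphi_1 \rightarrow \psi_2)$, hence $\sigma_f \in \mathcal{L}(\varphi')$. Since $\varphi'$ depends only on $V_1$ and $\restrict{\sigma_f}{V_1} = \sigma$, we conclude $\sigma \in \mathcal{L}(\varphi')$, so $g$ realizes $\varphi'$.

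The main obstacle is this second direction, and specifically the bookkeeping of the lifting that interleaves $f^c_2$'s input choices with $f$'s outputs: one must verify simultaneously that the lifted word is genuinely compatible with $f$ and that it satisfies $\varphi_2$, while the projection to $V_1$ faithfully reproduces a word compatible with $g$. This is exactly the delicate part already handled in \Cref{lem:assumption_dropping}, and the disjointness of $\propositions{\varphi_2}$ from $V_1$ is what makes the interleaving well defined. I do not expect a clean black-box reduction to \Cref{lem:assumption_dropping} to work, because that lemma only equates the realizability of the implication parts $(\varphi_1 \land \varphi_2) \rightarrow \psi_2$ and $\varphi_1 \rightarrow \psi_2$, whereas here $\psi_1$ must be realized \emph{simultaneously} with the implication by a single implementation; equirealizability of the two implications in isolation does not transfer to their conjunction with $\psi_1$.
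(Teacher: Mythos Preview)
Your proposal is correct and follows essentially the same approach as the paper: both directions reuse the constructions from \Cref{lem:assumption_dropping} verbatim, carrying $\psi_1$ along via the observation that $\propositions{\varphi_2} \cap V_1 = \emptyset$, and your remark that a black-box appeal to \Cref{lem:assumption_dropping} fails is also in line with the paper's treatment. Your write-up is in fact slightly more explicit than the paper's, which simply points to the earlier proof for the construction of $f$ and $g$ and records only the additional step $\sigma_f \in \mathcal{L}(\psi_1 \land (\varphi_1 \rightarrow \psi_2))$.
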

\begin{proof}
	Let $V_1 := \propositions{\psi_1 \land (\varphi_1 \rightarrow \psi_2)}$, $\inputs_1 := I \cap V_1$, and $\outputs_1 := O \cap V_1$.
	If $\varphi'$ is realizable, then we can construct an implementation $f: (2^V)^* \times 2^\inputs \rightarrow 2^\outputs$ that realizes $\varphi$ from the implementation $f_1: (2^{V_1})^* \times 2^{\inputs_1} \rightarrow 2^{\outputs_1}$ that realizes $\varphi'$ analogous to the proof of \Cref{lem:assumption_dropping}.
	
	If $\varphi$ is realizable, then there is an implementation $f: (2^V)^* \times 2^\inputs \rightarrow 2^\outputs$ that realizes $\varphi$.
	Since $\neg\varphi_2$ is unrealizable by assumption, there is a counterstrategy $f^c_2: (2^{\propositions{\varphi_2}})^* \rightarrow 2^{\inputs \cap \propositions{\varphi_2}}$ for $\neg \varphi_2$ and all words compatible with $f^c_2$ satisfy $\varphi_2$.
	Let $g: (2^{V_1})^* \times 2^{\inputs_1} \rightarrow 2^{\outputs_1}$ be the implementation constructed from $f$ and $f^c_2$ in the proof of \Cref{lem:assumption_dropping}. We show that $g$ realizes $\varphi'$.
	Let $\sigma \in \compatibleWords{g}$ and let $\sigma_f$ be the corresponding infinite sequence obtained from $g$ when not restricting the output of $f$ to the variables in $O_1$.
	As shown in the proof of \Cref{lem:assumption_dropping}, $\sigma_\mathit{f} \in \mathcal{L}(\varphi)$ and $\sigma_\mathit{f} \in \mathcal{L}(\varphi_2)$.
	Thus, $\sigma_\mathit{f} \in \mathcal{L}(\psi_1 \land (\varphi_1 \rightarrow \psi_2))$.
    Since $\varphi_2$ neither shares variables with $\varphi_1$ nor with $\psi_1$ or $\psi_2$, the satisfaction of $\psi_1 \land (\varphi_1 \rightarrow \psi_2)$ is not influenced by the variables outside of $V_1$. 
    Hence, since $\sigma_\mathit{f} \cap V_1 = \sigma$ by construction, $\sigma \in \mathcal{L}(\varphi')$ follows and thus $g$ realizes $\varphi'$.\qed	
\end{proof}

Similar to the optimized assumption handling for specifications in strict assume-guarantee form described in the previous section, we utilize \Cref{lem:assumption_dropping_optimized} for an optimized decomposition for specifications containing several assume-guarantee conjuncts:
We rewrite LTL formulas of the form $\varphi = \psi' \land \bigwedge^m_{i=1} \varphi_i \rightarrow \bigwedge^n_{j=1} \psi_j$ to $\psi' \land \bigwedge^n_{j=1}(\bigwedge^m_{i=1} \varphi_i \rightarrow \psi_j)$ and then drop assumptions for the individual guarantees $\psi_1, \dots, \psi_j$ according to \Cref{lem:assumption_dropping_optimized}. If the resulting subspecifications only share input variables, they are equirealizable to $\varphi$.

\begin{theorem}\label{thm:ltl_decomposition_with_assumptions_optimized}
	$\!\!\!$ Let $\varphi \! = \! \psi'_1 \land \psi'_2 \land (\!(\varphi_1 \land \varphi_2 \land \varphi_3) \!\rightarrow \!\psi_1 \land \psi_2)$ be an LTL formula over $V$, where
	$\propositions{\varphi_3} \subseteq \inputs$ and
	$(\propositions{\psi_1} \cup \propositions{\psi'_1}) \cap (\propositions{\psi_2} \cup \propositions{\psi'_2})\subseteq \inputs$. 
	Let $\propositions{\varphi_i} \cap \propositions{\varphi_j} = \emptyset$ for $i,j \in \{1,2,3\}$ with $i \neq j$, and let
	$\propositions{\varphi_i} \cap \propositions{\psi_{3-i}} = \emptyset$ for $i \in \{1,2\}$.
	Let $\propositions{\psi'_i} \cap \propositions{\varphi_{3-i}} = \emptyset$ for $i \in \{1,2\}$.
	Moreover, let $\neg(\varphi_1 \land \varphi_2 \land \varphi_3)$ be unrealizable. Then, $\varphi$ is realizable if, and only if, both $\varphi' = \psi' \land ((\varphi_1 \land \varphi_3) \rightarrow \psi_1)$ and $\varphi'' = \psi'' \land ((\varphi_2 \land \varphi_3) \rightarrow \psi_2)$ are realizable.
\end{theorem}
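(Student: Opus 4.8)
The plan is to follow the structure of the proof of \Cref{thm:ltl_decomposition_with_assumptions}, but to replace the appeal to \Cref{lem:assumption_dropping} by its free-conjunct variant \Cref{lem:assumption_dropping_optimized}, which is exactly what lets the unconditional conjuncts $\psi'_1,\psi'_2$ be carried along (note that they cannot simply be absorbed into $\psi_1,\psi_2$, since they must hold unconditionally rather than only under the assumption). I would first fix $V_1 := \propositions{\psi'_1}\cup\propositions{\psi_1}\cup\propositions{\varphi_1}\cup\propositions{\varphi_3}$ and $V_2 := \propositions{\psi'_2}\cup\propositions{\psi_2}\cup\propositions{\varphi_2}\cup\propositions{\varphi_3}$, \ie the propositions of $\varphi'$ and $\varphi''$. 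Since $V=\propositions{\varphi}$, we get $V_1\cup V_2 = V$ at once, and $V_1\cap V_2\subseteq\inputs$ by a case check on the pairwise intersections: the pairs among the guarantee blocks lie in $\inputs$ by $(\propositions{\psi_1}\cup\propositions{\psi'_1})\cap(\propositions{\psi_2}\cup\propositions{\psi'_2})\subseteq\inputs$, the pairs involving $\varphi_3$ lie in $\inputs$ because $\propositions{\varphi_3}\subseteq\inputs$, and the remaining pairs vanish by $\propositions{\varphi_i}\cap\propositions{\varphi_j}=\emptyset$, $\propositions{\varphi_i}\cap\propositions{\psi_{3-i}}=\emptyset$ and $\propositions{\psi'_i}\cap\propositions{\varphi_{3-i}}=\emptyset$. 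This bookkeeping is tedious but routine.

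For the forward direction I would first record that the single premise that $\neg(\varphi_1\land\varphi_2\land\varphi_3)$ is unrealizable already yields that $\neg\varphi_1$ and $\neg\varphi_2$ are each unrealizable: since $\mathcal{L}(\neg\varphi_2)\subseteq\mathcal{L}(\neg(\varphi_1\land\varphi_2\land\varphi_3))$, any implementation realizing $\neg\varphi_2$ would realize the weaker formula, so contraposition settles it (symmetrically for $\neg\varphi_1$). Distributing the guarantee gives $\varphi\equiv\chi_1\land\chi_2$ with $\chi_i := \psi'_i\land((\varphi_1\land\varphi_2\land\varphi_3)\rightarrow\psi_i)$. Given an implementation $f$ realizing $\varphi$, I would extract an implementation realizing $\chi_i$ by restricting $f$ to the outputs occurring in $\chi_i$, exactly as in \Cref{thm:ltl_decomposition_with_assumptions}. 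Then $\chi_1$ matches \Cref{lem:assumption_dropping_optimized} with free conjunct $\psi'_1$, retained assumption $\varphi_1\land\varphi_3$, dropped assumption $\varphi_2$, and guarantee $\psi_1$: its three side conditions are $\propositions{\varphi_2}\cap\propositions{\varphi_1\land\varphi_3}=\emptyset$, $\propositions{\varphi_2}\cap\propositions{\psi'_1}=\emptyset$, and $\propositions{\varphi_2}\cap\propositions{\psi_1}=\emptyset$, all supplied by the hypotheses, and $\neg\varphi_2$ is unrealizable by the observation above. Hence $\varphi'=\psi'_1\land((\varphi_1\land\varphi_3)\rightarrow\psi_1)$ is realizable, and symmetrically $\varphi''$ is realizable by dropping $\varphi_1$ from $\chi_2$.

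For the converse I would avoid the lemma and compose directly, as in \Cref{thm:ltl_decomposition_with_assumptions}. Given $f_1,f_2$ realizing $\varphi',\varphi''$, set $f(\sigma,\inp{i}):=f_1(\restrict{\sigma}{V_1},\inp{i}\cap V_1)\cup f_2(\restrict{\sigma}{V_2},\inp{i}\cap V_2)$. Because $V_1\cap V_2\subseteq\inputs$ the two parts never disagree on an output, so each $\sigma\in\compatibleWords{f}$ satisfies $\restrict{\sigma}{V_1}\in\mathcal{L}(\varphi')$ and $\restrict{\sigma}{V_2}\in\mathcal{L}(\varphi'')$; as $V_1\cup V_2=V$ this means $\sigma\in\mathcal{L}(\varphi'\land\varphi'')$. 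I would then close with the propositional tautology $\varphi'\land\varphi''\rightarrow\varphi$: under the joint assumption $\varphi_1\land\varphi_2\land\varphi_3$ the retained implications force $\psi_1$ and $\psi_2$, while $\psi'_1$ and $\psi'_2$ hold unconditionally, so $\sigma\models\varphi$ and $f$ realizes $\varphi$. Note that this direction uses neither the lemma nor the unrealizability premise.

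The step I expect to be the main obstacle is the forward extraction of an implementation of $\chi_i$ from $f$: because the intermediate formula $\chi_1$ still contains $\varphi_2$, it mentions variables living in $V_2$, so ``restricting $f$ to the outputs of $\chi_1$'' must be justified with care. This is the same delicate restriction argument already present in \Cref{thm:ltl_decomposition_with_assumptions}, and I would reuse it verbatim; the remaining ingredients---the variable-set bookkeeping, the monotonicity observation on unrealizability, and the closing tautology---are all routine.
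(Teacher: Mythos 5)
Your proof is correct and takes essentially the same route as the paper: the forward direction rewrites $\varphi$ as $\chi_1 \land \chi_2$, restricts the given implementation to each conjunct's variables, and invokes \Cref{lem:assumption_dropping_optimized} with dropped assumption $\varphi_{3-i}$, while the converse composes the two implementations on disjoint outputs and concludes via the implication $\varphi' \land \varphi'' \rightarrow \varphi$, exactly as the paper does by deferring to the construction in \Cref{thm:ltl_decomposition_with_assumptions}. Your explicit monotonicity observation that unrealizability of $\neg(\varphi_1 \land \varphi_2 \land \varphi_3)$ yields unrealizability of $\neg\varphi_1$ and $\neg\varphi_2$ is a step the paper leaves implicit when applying the lemma, so it is a welcome addition rather than a deviation.
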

\begin{proof}
	First, let $\varphi'$ and $\varphi''$ be realizable. Then, there are implementations $f_1$ and $f_2$ realizing $\varphi'$ and~$\varphi''$, respectively.
	Since $\varphi'$ and $\varphi''$ do not share output variables by assumption, we can construct an implementation realizing $\varphi$ from $f_1$ and $f_2$ as in the proof of \Cref{thm:ltl_decomposition_with_assumptions}.
	
	Next, let $\varphi$ be realizable and let $f: (2^V)^* \times 2^\inputs \rightarrow 2^\outputs$ be an implementation realizing $\varphi$. Let $\sigma \in \compatibleWords{f}$. Then, $\sigma \in \mathcal{L}(\varphi)$ holds. Let $V' = \propositions{\varphi'} \cup \propositions{\varphi_2}$ and let $V'' = \propositions{\varphi''} \cup \propositions{\varphi_1}$.
	Then, since $\sigma \in \mathcal{L}(\varphi)$ holds, $\sigma \cap V' \in \mathcal{L}(\psi' \land ((\varphi_1 \land \varphi_2 \land \varphi_3) \rightarrow \psi_1))$ as well as $\sigma \cap V'' \in \mathcal{L}(\psi'' \land ((\varphi_1 \land \varphi_2 \land \varphi_3) \rightarrow \psi_2))$ follow.
	Thus, an implementation $f_1$ that behaves as $f$ restricted to the variables in $V'$ realizes $\psi' \land ((\varphi_1 \land \varphi_2 \land \varphi_3) \rightarrow \psi_1)$. An  implementation $f_2$ that behaves as $f$ restricted to the variables in $V''$ realizes $\psi'' \land ((\varphi_1 \land \varphi_2 \land \varphi_3) \rightarrow \psi_2)$.
	By assumption, for $i \in \{1,2\}$, $\varphi_{i}$ does not share any variables with $\varphi_3$, $\varphi_{3-1}$, $\psi_{3-1}$ and $\psi'_{3-1}$. 
	Therefore, by \Cref{lem:assumption_dropping_optimized}, $\psi'_1 \land ((\varphi_1 \land \varphi_2 \land \varphi_3) \rightarrow \psi_1)$ and $\varphi'$ are equirealizable. Moreover, $\psi'_2 \land ((\varphi_1 \land \varphi_2 \land \varphi_3) \rightarrow \psi_2)$ and $\varphi''$ are equirealizable.
	Thus, since $f_1$ and $f_2$ realize the former formulas, $\varphi'$ and $\varphi''$ are both realizable. \qed
\end{proof}

\begin{algorithm}[t]
    \DontPrintSemicolon
    \KwIn{$\varphi$: LTL, \var{inp}: List Variable, \var{out}: List Variable}
    \KwResult{\var{specs}: List (LTL, List Variable, List Variable)}
    \var{implication} $\leftarrow$ chooseImplication($\varphi$)\;
    \var{assumptions} $\leftarrow$ getAssumptions(\var{implication})\;
    \var{guarantees} $\leftarrow$ getGuarantees(\var{implication})\;
    \var{decCritProps} $\leftarrow$ getDecCritProps(\var{implication})\;
    \var{graph} $\leftarrow$ buildDependencyGraph($\varphi$,\var{decCritProps})\;
    \var{components} $\leftarrow$ \var{graph}.connectedComponents()\;
    \var{specs} $\leftarrow$ new LTL[$|$\var{components}$|+1$]\;
    \var{freeAssumptions} $\leftarrow$[\ ]\;
    \ForEach{\upshape{$\psi \in$ \var{assumptions}}}{
        \var{propositions} $\leftarrow$ \var{decCritProps} $\cap$ getProps($\psi$)\;
        \eIf{\upshape{$|$\var{propositions}$| = 0$}}{
            \var{freeAssumptions}.append($\psi$)\;
        }{
            \ForEach{\upshape{(\var{spec}, \var{set}) $\in$ zip \var{specs} (\var{components} $++$ [\var{inp}])}}{
                \If{\upshape{\var{propositions} $\cap$ \var{set} $\neq \emptyset$}}{
                    \var{spec}.addAssumption($\psi$)\;
                    break\;
                }
            }	
        }
    }
    \ForEach{\upshape{$\psi \in$ \var{guarantees}}}{
        \var{propositions} $\leftarrow$ \var{decCritProps} $\cap$ getProps($\psi$)\;
        \ForEach{\upshape{(\var{spec}, \var{set}) $\in$ zip \var{specs} (\var{components} $++$ [\var{inp}])}}{
            \If{\upshape{\var{propositions} $\cap$ \var{set} $\neq \emptyset$}}{
                \var{spec}.addGuarantee($\psi$)\;
                break\;
            }
        }
        
    }
    \ForEach{\upshape{$\psi \in$ getConjuncts($\varphi$)$\setminus$\var{implication}}}{
        \var{propositions} $\leftarrow$ \var{decCritProps} $\cap$ getProps($\psi$)\;
        \ForEach{\upshape{(\var{spec}, \var{set}) $\in$ zip \var{specs} (\var{components} $++$ [\var{inp}])}}{
            \If{\upshape{\var{propositions} $\cap$ \var{set} $\neq \emptyset$}}{
                \var{spec}.addConjunct($\psi$)\;
                break\;
            }
        }
        
    }
    \KwRet{\upshape{addFreeAssumptions \var{specs freeAssumptions}}}

    \caption{Optimized LTL Decomposition Algorithm for Specifications with Conjuncts}
    \label{alg:optimized_decomposition_with_conjuncts}
\end{algorithm} 

Utilizing \Cref{thm:ltl_decomposition_with_assumptions_optimized}, we extend \Cref{alg:optimized_decomposition} to LTL specifications that do not follow a strict assume-guarantee form but consist of multiple conjuncts. The extended algorithm is depicted in \Cref{alg:optimized_decomposition_with_conjuncts}. We assume that the specification is not decomposable by \Cref{alg:rewriting-based_decomposition}, \ie, we assume that no plain decompositions are possible.
In practice, we thus first rewrite the specification and apply \Cref{alg:rewriting-based_decomposition} afterwards before then applying \Cref{alg:optimized_decomposition_with_conjuncts} to the resulting subspecifications.

Hence, we assume that the dependency graph built from the output propositions of all given conjuncts consists of a single connected component. \Cref{thm:ltl_decomposition_with_assumptions_optimized} hands us the tools to ``break a link'' in that chain of dependencies. This link has to be induced by a suitable implication.
\Cref{alg:optimized_decomposition_with_conjuncts} assumes that at least one of the conjuncts is an implication. In case of more than one implication, the choice of the implication consequently determines whether or not a decomposition os found. Therefore, it is crucial to reapply the algorithm on the subspecifications after a decomposition has been found and to try all implications if no decomposition is found. Since iterating through all conjuncts does not pose a large overhead in computing time, the choice of the implication is not further specified in the algorithm.

The extended algorithm is similar to \Cref{alg:optimized_decomposition}.
Note that the dependency graph used for finding the decomposition-critical propositions is built only from the assumptions of the chosen implication as we are only seeking for droppable assumptions of this implication.
In contrast to \Cref{alg:optimized_decomposition}, the dependency graph in line 5 of \Cref{alg:optimized_decomposition_with_conjuncts} also includes the dependencies induced by the other conjuncts, similarly to the dependency graph in \Cref{alg:rewriting-based_decomposition}. Here, we consider all decomposition-critical variables in the conjuncts, not only output variables, as an assumption can only be dropped if there are no shared variables with the remaining conjuncts. Therefore, the additional conjuncts are treated in the same way as the guarantees.
This carries over to when the conjuncts are added to the subspecifications.
Lastly, \Cref{alg:optimized_decomposition_with_conjuncts} slightly differs from \Cref{alg:optimized_decomposition} when the free assumptions are added to the subspecifications. Here, the  remaining conjuncts have to be considered, too, since we may not drop assumptions that share variables with the outside conjunct. Consequently, all free assumptions that share an input with one of the remaining conjuncts, needs to be added.

One detail that has to be taken into account when integrating this LTL decomposition algorithm with extended optimized assumption handling into a synthesis tool, is that, like \Cref{alg:optimized_decomposition}, \Cref{alg:optimized_decomposition_with_conjuncts} assumes that all negated assumptions are unrealizable. For formulas in a strict assume-guarantee format, the consequences of realizable assumptions is that we have found a strategy for the implementation. This changes when considering formulas with additional conjuncts since they might forbid this strategy. To detect such strategies, we can verify the synthesized strategy against the remaining conjunct and only extend it to a counterstrategy for the whole specification in the positive case.


\section{Experimental Evaluation}

We implemented the modular synthesis algorithm as well as the decomposition approaches and evaluated them on the 346 publicly available SYNTCOMP~\cite{SYNTCOMP} 2020 benchmarks. Note that only 207 of the benchmarks have more than one output variable and are therefore realistic candidates for decomposition.
The automaton decomposition algorithm utilizes Spot's~\cite{Duret-LutzLFMRX16} automaton library (Version 2.9.6). The LTL decomposition relies on SyFCo~\cite{JacobsFS16} for formula transformations (Version 1.2.1.1).
We first decompose the specification with our algorithms and then run synthesis on the resulting subspecifications.
We compare the CPU time of the synthesis task as well as the number of gates, and latches of the synthesized AIGER circuit for the original specification to the sum of the corresponding attributes of all subspecifications.
Thus, we calculate the runtime for sequential modular synthesis. 
Parallelization of the synthesis tasks may further reduce the runtime.

\subsection{LTL Decomposition}

\begin{figure}[t]
    \centering
    \scalebox{0.96}{
    \begin{tikzpicture}
    \begin{axis}[
    width=1.05\linewidth,
    height=0.95\linewidth,
    axis lines=left,
    grid=both,
    legend style={draw=none, font=\footnotesize},
    legend cell align=left,
    legend style={at={(0.25,0.9)}, anchor=north},
    label style={font=\footnotesize},
    ticklabel style={font=\scriptsize},
    xmin=0,
    ymax=3500,
    xmax=40,
    xtick={0,5,...,40},
    log ticks with fixed point,
    ymode=log,
    xlabel=solved instances,
    ylabel=synthesis time (seconds),
    xlabel style={xshift=0cm, yshift=0cm},
    ylabel style={xshift=0cm, yshift=-0.3cm},
    legend entries={{BoSy (original)}, {BoSy (modular)}, {Strix (original)}, {Strix (modular)}},
    ]
    \addplot[color=black!40!green, mark=triangle*, mark options={scale=0.6}, thick] coordinates {
        ( 1,0.264)
        ( 2,0.268)
        ( 3,0.276)
        ( 4,0.288)
        ( 5,0.292)
        ( 6,0.296)
        ( 7,0.32)
        ( 8,0.404)
        ( 9,0.42)
        ( 10,0.532)
        ( 11,1.172)
        ( 12,1.184)
        ( 13,1.196)
        ( 14,1.316)
        ( 15,4.336)
        ( 16,12.048)
        ( 17,20.06)
        ( 18,42.804)
        ( 19,46.288)
        ( 20,70.71)
        ( 21,98.56)
        ( 22,1192.84)
        ( 23,1526.32)
        ( 24,2986.78)
        ( 25,7200) };
    \addplot[color=yellow!80!red, mark=diamond*, mark options={scale=0.75}, thick] coordinates {
        ( 1,0.556)
        ( 2,0.576)
        ( 3,0.62)
        ( 4,0.68)
        ( 5,0.776)
        ( 6,0.8)
        ( 7,0.8)
        ( 8,0.808)
        ( 9,0.872)
        ( 10,0.884)
        ( 11,0.944)
        ( 12,0.948)
        ( 13,1.08)
        ( 14,1.096)
        ( 15,1.18)
        ( 16,1.3)
        ( 17,1.364)
        ( 18,1.632)
        ( 19,1.74)
        ( 20,2.108)
        ( 21,2.168)
        ( 22,2.192)
        ( 23,2.692)
        ( 24,2.996)
        ( 25,3.008)
        ( 26,3.244)
        ( 27,4.188)
        ( 28,19.846)
        ( 29,27.136)
        ( 30,86.68)
        ( 31,251.288)
        ( 32,1179.68)
        ( 33,2326.39)
        ( 34,7200) };
    \addplot[color=blue, mark=*,  mark options={scale=0.5}, thick] coordinates {
        (1, 0.472) (2, 0.968) (3, 1.052) (4, 1.096)
        (5, 1.216)
        (6, 1.22)
        (7, 1.284)
        (8, 1.336)
        (9, 1.468)
        (10, 1.54)
        (11, 1.54)
        (12, 1.576)
        (13, 1.596)
        (14, 1.612)
        (15, 1.764)
        (16, 1.848)
        (17, 1.896)
        (18, 1.9)
        (19, 1.908)
        (20, 2.2)
        (21, 2.488)
        (22, 3.988)
        (23, 6.232)
        (24, 6.336)
        (25, 11.06)
        (26, 19.408)
        (27, 25.28)
        (28, 126.808)
        (29, 340.34)
        (30, 534.732)
        (31, 630.492)
        (32, 1062.27) (33, 7200)};
    \addplot[color=violet!50!white, mark=square*, mark options={scale=0.5}, thick] coordinates {
        ( 1,0.928)
        ( 2,1.856)
        ( 3,1.912)
        ( 4,2.244)
        ( 5,2.36)
        ( 6,2.664)
        ( 7,2.7)
        ( 8,2.72)
        ( 9,2.744)
        ( 10,3.008)
        ( 11,3.12)
        ( 12,3.128)
        ( 13,3.228)
        ( 14,3.344)
        ( 15,3.464)
        ( 16,3.5)
        ( 17,3.58)
        ( 18,3.6)
        ( 19,3.672)
        ( 20,4.464)
        ( 21,5.156)
        ( 22,5.2)
        ( 23,5.428)
        ( 24,5.656)
        ( 25,6.272)
        ( 26,6.596)
        ( 27,7.892)
        ( 28,8.148)
        ( 29,8.168)
        ( 30,11.04)
        ( 31,25.292)
        ( 32,167.004)
        ( 33,319.988)
        ( 34,350.404)
        ( 35,355.62)
        ( 36,395.236)
        ( 37,1156.68)
        ( 38,7200)
    };

    \end{axis}
    
    \end{tikzpicture}}
	\caption{Comparison of the performance of modular and non-compositional synthesis with BoSy and Strix on the decomposable SYNTCOMP benchmarks. For the modular approach, the accumulated time for all synthesis tasks is depicted.}
    \label{comparison_ltl}
\end{figure}

\begin{table*}[t]
        \caption{Distribution of the number of subspecifications over all specifications for LTL decomposition.}
    \label{table:ltl_subspecs}
    \centering
    \begin{tabular}{p{2.9cm}|>{\centering}p{0.5cm}|>{\centering}p{0.5cm}|>{\centering}p{0.5cm}|>{\centering}p{0.5cm}|>{\centering}p{0.5cm}|>{\centering}p{0.5cm}|>{\centering}p{0.5cm}|>{\centering}p{0.5cm}|>{\centering}p{0.5cm}|>{\centering}p{0.5cm}|>{\centering}p{0.5cm}|>{\centering\arraybackslash}p{0.5cm}}
        \# subspecifications& 1 & 2 & 3 & 4 & 5 & 6 & 7 & 8 & 9 & 10 & 11 & 12 \\ 
        \hline 
        \# specifications& 308 & 19 & 8 & 2 & 3 & 2 & 0 & 2 & 0 & 1  & 1 & 1  \\ 
    \end{tabular} 
\end{table*}

\begin{table*}[t]
    \centering
    \caption{Synthesis time in seconds of BoSy and Strix for non-compositional and modular synthesis on exemplary SYNTCOMP benchmarks with a timeout of 60 minutes.}
    \label{ltl_times}
    \begin{tabular}{p{2.9cm}|>{\centering}p{1.6cm}|>{\centering}p{1.6cm}|>{\centering}p{1.6cm}|>{\centering}p{1.6cm}|>{\centering\arraybackslash}p{1.8cm}}
    	 & \multicolumn{2}{c|}{original} & \multicolumn{2}{c|}{modular} & ~\\
	     Benchmark & BoSy & Strix & BoSy & Strix & \# subspec.\\
	     \hline
	     Cockpitboard & 1526.32 & 11.06 & \textbf{2.108} & 8.168 & 8\\
		 Gamelogic & TO & 1062.27 & TO & \textbf{25.292} & 4\\
		 LedMatrix & TO & TO& TO & \textbf{1156.68} & 3\\
		 Radarboard & TO & 126.808 & \textbf{3.008}& 11.04 & 11\\		 
		 Zoo10 & 1.316 & 1.54 & \textbf{0.884}& 2.744 & 2\\
		 generalized\_buffer\_2 & 70.71 & 534.732 & \textbf{4.188}& 7.892 & 2\\
		 generalized\_buffer\_3 & TO & TO& \textbf{27.136}& 319.988 & 3\\
		 shift\_8 & \textbf{0.404}& 1.336 & 2.168& 3.6 & 8\\
		 shift\_10 & \textbf{1.172}& 1.896 & 2.692 & 4.464 & 10\\
		 shift\_12 & 4.336& 6.232 & \textbf{3.244} & 5.428 & 12
    \end{tabular}
\end{table*}

LTL decomposition with optimized assumption handling (c.f.\ \Cref{sec:ltl_optimized}) terminates on all benchmarks in less than 26ms. Thus, even for non-decomposable specifications, the overhead of trying to perform decompositions is negligible. The algorithm decomposes 39 formulas into several subspecifications, most of them yielding two or three subspecifications.
Only a handful of formulas are decomposed into more than six subspecifications.
The full distribution of the number of subspecifications for all specifications is shown in \Cref{table:ltl_subspecs}

We evaluate our modular synthesis approach with two state-of-the-art synthesis tools: BoSy~\cite{BoSy}, a bounded synthesis tool, and Strix~\cite{MeyerStrix}, a game-based synthesis tool, both in their 2019 release. We used a machine with a 3.6GHz quad-core Intel Xeon processor and 32GB RAM as well as a timeout of 60 minutes.

\begin{table}[t]
    \centering
    \caption{Gates of the synthesized solutions of BoSy and Strix for non-compositional and modular synthesis on exemplary SYNTCOMP benchmarks. Entry -- denotes that no solution was found within 60 minutes.}
    \label{ltl_gat}
\begin{tabular}{p{2.8cm}|>{\centering}p{0.8cm}|>{\centering}p{0.8cm}|>{\centering}p{0.8cm}|>{\centering\arraybackslash}p{0.8cm}}
    & \multicolumn{2}{c|}{original} & \multicolumn{2}{c}{modular}\\
	Benchmark & BoSy & Strix & BoSy & Strix \\
    \hline
    Cockpitboard            & 11 & \textbf{7} & 25 & 10 \\
    Gamelogic               & -- & 26 & -- & \textbf{21} \\
    LedMatrix               & -- & -- & -- & \textbf{97} \\
	Radarboard              & -- & \textbf{6} & 19 & \textbf{6} \\
    Zoo10                   & 14 & 15 & 15 & \textbf{13}  \\
    generalized\_buffer\_2  & \textbf{3} & 12 & \textbf{3} & 11\\
    generalized\_buffer\_3  & -- & -- & \textbf{20} & 3772 \\
    shift\_8                & 8 & \textbf{0} & 8 & 7  \\
    shift\_10               & 10 & \textbf{0} & 10 & 9  \\
    shift\_12               & 12 & \textbf{0} & 12 & 11 
\end{tabular}
\end{table}
\begin{table}[t]
    \centering
    \caption{Latches of the synthesixed solutions of BoSy and Strix for non-compositional and modular synthesis on exemplary SYNTCOMP benchmarks. Entry -- denotes that no solution was found within 60 minutes.}
    \label{ltl_lat}
\begin{tabular}{p{2.8cm}|>{\centering}p{0.8cm}|>{\centering}p{0.8cm}|>{\centering}p{0.8cm}|>{\centering\arraybackslash}p{0.8cm}}
    & \multicolumn{2}{c|}{original} & \multicolumn{2}{c}{modular}\\
	Benchmark & BoSy & Strix & BoSy & Strix \\
    \hline
    Cockpitboard            & 1 & \textbf{0} & 8 & \textbf{0} \\
    Gamelogic               & -- & \textbf{2} & -- & \textbf{2} \\
    LedMatrix               & -- & -- & -- & \textbf{5} \\
	Radarboard              & -- & \textbf{0} & 11 & \textbf{0} \\
    Zoo10                   & \textbf{1} & 2 & 2 & 2 \\
    generalized\_buffer\_2  & 69 & 47134 & \textbf{14} & 557 \\
    generalized\_buffer\_3  & -- & -- & \textbf{3} & 14 \\
    shift\_8                & 1 & \textbf{0} & 8 & \textbf{0} \\
    shift\_10               & 1 & \textbf{0} & 10 & \textbf{0} \\
    shift\_12               & 1 & \textbf{0} & 12 & \textbf{0}
\end{tabular}
\end{table}

\begin{table*}[t]
    \centering
    \caption{Distribution of the number of subspecifications over all specifications for NBA decomposition. For 79 specifications, the timeout (60min) was reached. For 39 specification, the memory limit (16GB) was reached.}
    \label{table:nba_subspecs}
    \begin{tabular}{p{1.5cm}|>{\centering}p{0.5cm}|>{\centering}p{0.5cm}|>{\centering}p{0.5cm}|>{\centering}p{0.5cm}|>{\centering}p{0.5cm}|>{\centering}p{0.5cm}|>{\centering}p{0.5cm}|>{\centering}p{0.5cm}|>{\centering}p{0.5cm}|>{\centering}p{0.5cm}|>{\centering}p{0.5cm}|>{\centering}p{0.5cm}|>{\centering}p{0.5cm}|>{\centering}p{0.5cm}|>{\centering}p{0.5cm}|>{\centering\arraybackslash}p{0.5cm}}

        \# subspec. & 1 & 2 & 3 & 4 & 5 & 6 & 7 & 8 & 9 & 10 & 12 & 14 & 19 & 20 & 24 & 36\\ 
        \hline 
        \# spec.& 192 & 9 & 8 & 6 & 2 & 3 & 1 & 2 & 1 & 1 & 4 & 1 & 1 & 2 & 1 & 1\\ 
    \end{tabular} 
\end{table*}

In \Cref{comparison_ltl}, the comparison of the accumulated runtimes of the synthesis tasks of the subspecifications and of the original formula is shown for the decomposable SYNTCOMP benchmarks.
For both BoSy and Strix, decomposition generates a slight overhead for small specifications. For larger and more complex specifications, however, modular synthesis decreases the execution time significantly, often by an order of magnitude or more.
Note that due to the negligible runtime of specification decomposition, the plot looks similar when considering all SYNTCOMP benchmarks.

\Cref{ltl_times} shows the running times of BoSy and Strix for modular and non-compositional synthesis on exemplary benchmarks. For modular synthesis, the accumulated running time of all synthesis tasks is depicted.
On almost all of them, both tools decrease their synthesis times with modular synthesis notably compared to the original non-compositional approaches. 
Particularly noteworthy is the benchmark \emph{generalized\_buffer\_3}. In the last synthesis competition, SYNTCOMP~2021, no tool was able to synthesize a solution for it within one hour. With modular synthesis, however, BoSy yields a result in less than 28 seconds.

In \Cref{ltl_gat,ltl_lat}, the number of gates and latches, respectively, of the AIGER circuits~\cite{BiereHW11} corresponding to the implementations computed by BoSy and Strix for modular and non-compositional synthesis are depicted for exemplary benchmarks.
For most specifications, the solutions of modular synthesis are of the same size or smaller in terms of gates than the solutions for the original specification.
The size of the solutions in terms of latches, however, varies. Note that BoSy does not generate solutions with less than one latch in general. Hence, the modular solution will always have at least as many latches as subspecifications.

\subsection{Automaton Decomposition}

Besides LTL specifications, Strix also accepts specifications given as deterministic parity automata (DPAs) in extended HOA format~\cite{Perez2019}, an automaton format well-suited for synthesis. Thus, our implementation for decomposing specifications given as NBAs performs \Cref{alg:automaton-based_decomposition}, converts the resulting automata to DPAs and then synthesizes solutions with Strix.

For 235 out of the 346 benchmarks, NBA decomposition terminates within ten minutes, yielding several subspecifications or proving that the specification is not decomposable. 
In 79 of the other cases, the tool timed out after 60 minutes and in the remaining 32 cases it reached the memory limit of 16GB or the internal limits of Spot. 
Note, however, that for 81 specifications even plain DPA generation failed.
The distribution of the number of subspecifications for all specifications is shown in \Cref{table:nba_subspecs}.
Thus, while automaton decomposition yields more fine-grained decompositions than the approximate LTL approach, it becomes infeasible when the specifications grow. Hence, the advantage of smaller synthesis subtasks cannot pay off.
However, the coarser LTL decomposition suffices to reduce the synthesis time on common benchmarks significantly. Thus, LTL decomposition is in the right balance between small subtasks and a scalable decomposition.

For 43 specifications, the automaton approach yields decompositions and many of them consist of four or more subspecifications. For 22 of these specifications, the LTL approach yields a decomposition as well.
Yet, they differ in most cases, as the automaton approach yields more fine-grained decompositions.

Recall that only 207 SYNTCOMP benchmarks are realistic candidates for decomposition. The automaton approach proves that 90 of those specifications (43.6\%) are not decomposable. Thus, our implementations yield decompositions for 33.33\% (LTL) and 36.75\% (NBA) of the potentially decomposable specifications. 
We observed that decomposition works exceptionally well for specifications that stem from real system designs, for instance the Syntroids~\cite{GeierH0F19} case study, indicating that modular synthesis is particularly beneficial in practice.

\section{Conclusion}

We have presented a modular synthesis algorithm that applies compositional techniques to reactive synthesis. It reduces the complexity of synthesis by decomposing the specification in a preprocessing step and then performing independent synthesis tasks for the subspecifications.
We have introduced a criterion for decomposition algorithms that ensures soundness and completeness of modular synthesis as well as two algorithms for specification decomposition satisfying the criterion: A semantically precise one for specifications given as nondeterministic Büchi automata, and an approximate algorithm for LTL specifications.
We presented optimizations of the LTL decomposition algorithm for formulas in a strict assume-guarantee format and for formulas consisting of several assume-guarantee conjuncts. Both optimizations are based on dropping assumptions that do not influence the realizability of the rest of the formula.
We have implemented the modular synthesis algorithm as well as both decomposition algorithms and we compared our approach for the state-of-the-art synthesis tools BoSy and Strix to their non-compositional forms.
Our experiments clearly demonstrate the significant advantage of modular synthesis with LTL decomposition over traditional synthesis algorithms. While the overhead is negligible, both BoSy and Strix are able to synthesize solutions for more benchmarks with modular synthesis than in their non-compositional form. 
Moreover, on large and complex specifications, BoSy and Strix improve their synthesis times notably, demonstrating that specification decomposition is a game-changer for practical LTL synthesis.

Building up on the presented approach, we can additionally analyze whether the subspecifications fall into fragments for which efficient synthesis algorithms exist, for instance safety specifications. Since modular synthesis performs independent synthesis tasks for the subspecifications, we can choose, for each synthesis task, an algorithm that is tailored to the fragment the respective subspecification lies in.
 Moreover, parallelizing the individual synthesis tasks may increase the advantage of modular synthesis over classical algorithms.
Since the number of subspecifications computed by the LTL decomposition algorithm highly depends on the rewriting of the initial formula, a further promising next step is to develop more sophisticated rewriting algorithms.

\bibliographystyle{spmpsci}      
\bibliography{bib}   


\end{document}